\documentclass[aps,pra,a4paper,twocolumn,superscriptaddress,floatfix]{revtex4}

\usepackage{amssymb}
\usepackage{graphicx}
\usepackage{amsmath}
\usepackage{amsthm}
\usepackage{amsmath}
\usepackage{amsfonts}
\usepackage{mathtools}
\usepackage{latexsym}
\usepackage{dsfont}
\usepackage{marvosym}
\usepackage{mathrsfs}
\usepackage[utf8]{inputenc}
\usepackage[english]{babel}
\usepackage{bbm}
\usepackage{nonfloat}
\usepackage{color}
\usepackage{bm}
\newcommand{\Tr}{\mathop{\text{Tr}}\nolimits}
\renewcommand{\Re}{\mathop{\text{Re}}\nolimits}
\renewcommand{\Im}{\mathop{\text{Im}}\nolimits}
\newcommand{\Diag}{\mathop{\mathcal{D}iag}\nolimits}
\newcommand{\rank}{\mathop{\text{rank}}\nolimits}
\newcommand{\Span}{\mathop{\text{span}}\nolimits}
\theoremstyle{plain}
\newtheorem{theo}{Theorem}
\newtheorem{lem}{Lemma}
\newtheorem{cor}{Corollary}
\newtheorem{mydef}{Definition}
\newtheorem{prop}{Proposition}

\definecolor{dgreen}{rgb}{0,0.5,0}
\definecolor{delete}{cmyk}{0.5,0,0,0}

\begin{document}
\title{Hamiltonian Purification}

\author{Davide Orsucci}
\affiliation{Scuola Normale Superiore, I-56126 Pisa, Italy.}

\author{Daniel Burgarth}
\affiliation{Department of Mathematics, Aberystwyth University, Aberystwyth SY23 3BZ, United Kingdom}

\author{Paolo Facchi}
\affiliation{Dipartimento di Fisica and MECENAS, Universit\`a di Bari, I-70126 Bari, Italy}
\affiliation{INFN, Sezione di Bari, I-70126 Bari, Italy}

\author{Hiromichi Nakazato}
\affiliation{Department of Physics, Waseda University, Tokyo 169-8555, Japan}

\author{Saverio Pascazio}
\affiliation{Dipartimento di Fisica and MECENAS, Universit\`a di Bari, I-70126 Bari, Italy}
\affiliation{INFN, Sezione di Bari, I-70126 Bari, Italy}

\author{Kazuya Yuasa}
\affiliation{Department of Physics, Waseda University, Tokyo 169-8555, Japan}

\author{Vittorio Giovannetti}
\affiliation{NEST, Scuola Normale Superiore and Istituto Nanoscienze-CNR, I-56126 Pisa, Italy}

\begin{abstract}
The problem of Hamiltonian purification introduced by Burgarth \textit{et~al.}\ [D.~K. Burgarth \textit{et~al.}, Nat.\ Commun.\ \textbf{5},  5173  (2014)] is formalized and discussed.
Specifically, given a set of non-commuting Hamiltonians $\{ h_1, \ldots, h_m\}$ operating on a $d$-dimensional quantum system $\mathscr{H}_d$,
the problem consists in identifying a set of commuting Hamiltonians $\{ H_1, \ldots, H_m\}$ operating on a larger $d_E$-dimensional system $\mathscr{H}_{d_E}$ which embeds $\mathscr{H}_d$ as a proper subspace, such that $h_j = PH_jP$ with $P$ being the projection which allows one to recover $\mathscr{H}_d$ from $\mathscr{H}_{d_E}$. The notions of spanning-set purification and generator purification of an algebra are also introduced and optimal solutions for  $\mathfrak{u}(d)$ are provided.
 \end{abstract}

\maketitle

\section{Introduction}\label{sec:int}
The possibility of achieving the control over a quantum system  is the fundamental prerequisite for developing a new form of technology based on quantum effects \cite{QTECH,QTECH1,QTECH2}.  In particular this is an essential requirement for quantum computation, quantum communication, and more generally for all other data processing procedures that involve quantum systems as information carriers \cite{NC00}.

In many experimental settings, quantum control is implemented via an  electromagnetic field interacting with the system of interest, as happens for \textit{cold atoms} in optical lattices \cite{nca}, for \textit{trapped ions} \cite{ion}, for \textit{electrons} in quantum dots \cite{dots}, and actually in virtually all experiments in low energy physics. In this context  the electromagnetic field can be often treated as a classical field (in the limit of many \textit{quanta}), allowing a semiclassical description of the control over the quantum system \cite{Ale07,DP10,DH08}.
 Furthermore in many cases of physical interest the whole process can be effectively formalized by assuming that via proper manipulation of the field parameters the experimenter produces 
 a series of pulses implementing some specially engineered control Hamiltonians from a discrete set $\{H_1,\ldots, H_m\}$. Such pulses are assumed to be applied in any order, for any durations, by switching them on and off very sharply, 
the resulting transformation being a unitary evolution of the form
$e^{-iH_{j_N}t_N} \cdots e^{-iH_{j_1}t_1}$ 
with $j_1,\ldots,j_N \in \{1,\ldots,m\}$ and $t_1, \ldots, t_N$ being the  selected temporal durations (hereafter $\hbar$ is set to unity for simplicity) \cite{NOTA1}. By the \textit{Lie-algebraic rank condition} \cite{Ale07} the unitary operators that can be realized 
via such procedure
  are those in the connected Lie group associated to the real Lie algebra $\mathfrak{Lie}(H_1 , \ldots , H_m )$ generated by the Hamiltonians $\{H_1,\ldots, H_m\}$ [where $\mathfrak{Lie}(H_1 , \ldots , H_m )$ is formed by the real linear combinations of 
 $H_j$ and their iterated commutators $i[H_{j_1},  H_{j_2}]$,  $i\bm{[}H_{j_1}, i[H_{j_2},  H_{j_3}]\bm{]}$, etc.], i.e.,  $e^{-i \Xi}$ with $\Xi \in \mathfrak{Lie}(H_1 , \ldots , H_m )$. 
In this framework one then says that  full (unitary) controllability is achieved if the dimension of   $\mathfrak{Lie}(H_1 , \ldots , H_m )$ is large enough to permit the implementation of all possible unitary transformations on the system, i.e.,  if  $\mathfrak{Lie}(H_1 , \ldots , H_m )$ coincides  with the 
complete algebra $\mathfrak{u}(d)$ formed by self-adjoint $d\times d$ complex matrices \cite{NOTA2}, $d$ being the dimension of the controlled system.

The above scheme is the paradigmatic example of what is typically identified as \textit{open-loop} or \textit{non-adaptive} control, where all the operations are completely determined prior to the control experiment \cite{Ale07,DP10}. In other words the  system is driven in the absence  of an external feedback loop, i.e., without using any information gathered (via measurement) 
\textit{during} the evolution.
It turns out that in quantum mechanics an alternative mechanism of non-adaptive control is available: it is enforced via quantum Zeno dynamics \cite{FP02,ZenoMP}. In this scenario, while measurements are present, the associated outcomes are not used to guide the forthcoming operations: only their effects on the system evolution are exploited (a fact which has no analog in the classical domain).  The underlying physical principle is the following. When a quantum system undergoes a sharp  (von Neumann)  measurement, it is projected into one of the associated eigenspaces of the observable, say the space $\mathscr{H}_P$ characterized by an orthogonal projection $P$. It is then let to undergo a unitary evolution $e^{-i H \Delta t}$  for a short time $\Delta t$ and is measured again via the same von Neumann measurement. The probability to find it in a different measurement eigenspace $\mathscr{H}_{P'}$ orthogonal to the original one $\mathscr{H}_{P}$ is proportional to $(\Delta t)^2$. Instead, with high probability, the system remains in $\mathscr{H}_{P}$, while experiencing an effective unitary rotation of the form  $e^{-i h \Delta  t}$ induced by the projected Hamiltonian  $h := PHP$ \cite{FP02,ZenoMP,ZENO}. 
Accordingly, in the limit of infinitely frequent measurements performed within a fixed time interval $t$, the system remains in the subspace $\mathscr{H}_P$, evolving through an effective \textit{Zeno dynamics} described by the operator 
\begin{equation} \label{zenosequence}
	\lim_{N\to\infty}
	(P e^{-i  H {t}/{N}}  P)^N = P e^{ -i PHP t} = P e^{-i h t}.
\end{equation}

In Ref.\ \cite{Bur14}, it was shown that, by adopting the quantum Zeno dynamics, the control that the experimenter can enforce on a quantum system  can be greatly enhanced. For example, consider the case where the set of engineered Hamiltonians contains only two commuting elements $H_1$ and $H_2$. The associated Lie algebra they generate is just two-dimensional and hence is not sufficient to induce full controllability, even for the smallest quantum system, a qubit --- indeed $\dim[\mathfrak{u}(d=2)] =4$. 
Under these conditions, it turns out that for a proper choice of the projection $P$ it may happen that the projected counterparts $h_1=PH_1P$
and $h_2=PH_2P$  of the control Hamiltonians do not commute. Accordingly 
the Lie algebra generated by $\{h_1,h_2\}$ can be much larger than the one 
associated with $\{H_1, H_2\}$, 
and consequently the control exerted much finer.
In particular  the enhancement can be exponential in the system size. For instance in Ref.\ \cite{Bur14}  an explicit example is given where two commuting Hamiltonians   $H_1$ and $H_2$ act  on a chain of $n$ qubits, and once a proper Zeno projection $P$ is  applied on the first qubit of the chain the resulting 
Zeno Hamiltonians $h_1$ and $h_2$  generate the full algebra of traceless Hermitian operators acting on the remaining $n-1$ qubits (which is a Lie algebra of dimension $4^{n-1}$), thus allowing to perform any unitary operations on them.
Moreover, it can be shown that this is indeed a quite general phenomenon. In fact a simple argument \cite{Bur14}  shows that if a system is controllable for a specific choice of the parameters, then it is controllable for almost all choices of the parameters (with respect, e.g., to the Lebesgue measure). In the present case it means that, for almost all choices of a rank-$2^{n-1}$ projection $P$ and of two commuting Hamiltonians $\{{H}_1,{H}_2\}$, the system is fully controllable in the projected subspace $\mathscr{H}_{{P}}$ with the Hamiltonians
${h}_1 = {P} {H}_1 {P}$ and ${h}_2 = {P} {H}_2 {P}$.

The aforementioned results of \cite{Bur14}  show that as few as two commuting Hamiltonians, when projected on a smaller subspace of dimension $d$ through the Zeno mechanism, may achieve to generate the whole Lie algebra $\mathfrak{u}(d)$. The scope of the present article is to investigate the opposite question: given a set of Hamiltonians $\{h_1, \ldots, h_m\}$, which are non-commuting in general,  is it possible to extend them to a set of commuting Hamiltonians  $\{ H_1, \ldots, H_m\}$ from which $h_j$ can be obtained via a proper projection of the latter (i.e., $h_j = P H_jP$)? 
We call this operation \textit{Hamiltonian purification}, taking inspiration from similar problems which have been investigated in quantum information. For instance, we recall that by the \textit{state purification} \cite{NC00} a quantum mixed state $\rho$ on a system $S \cong \mathscr{H}_d$ is extended to a pure state $|\psi_\rho\rangle$ on a system $S+A \cong \mathscr{H}_d \otimes \mathscr{H}_d$, from which $\rho$ can be recovered through a partial trace over the ancilla system $A \cong \mathscr{H}_d$. Another similar result can be obtained for the \textit{channel purification (Stinespring dilation theorem)} or for the \textit{purification of positive operator-valued measure (POVM) (Naimark extension theorem)}, according to which all the completely positive trace-preserving linear maps and all the generalized measurement procedures, respectively, can be described as unitary transformations
on an extended system followed by partial trace~\cite{NC00,REV}.

In what follows  we start  by presenting a formal characterization of the Hamiltonian purification problem and of the associated notions of {\it spanning-set purification} and {\it generator purification} of an algebra (see Sec.\ \ref{sec1}). Then we prove some theorems regarding the minimal dimension $d_E^{\text{(min)}}$ of the extended Hilbert space needed to purify a given set of operators $\{h_1, \ldots, h_m\}$.
Specifically, in Sec.\ \ref{sec2} we analyze the case in which one is interested in purifying two linearly independent Hamiltonians. In this context we provide the exact value for $d_E^{\text{(min)}}$ when the input Hilbert space has dimension $d=2$ or $d=3$ and give lower and upper bounds for the remaining configurations. 
 In Sec.\ \ref{sec.4} instead we present a generic construction which allows one to put a bound on $d_E^{\text{(min)}}$ when the set of the operators $\{h_1, \ldots, h_m\}$ contains an arbitrary number $m$ of linearly independent 
elements. In Sec.\ \ref{sec5} we discuss the case in which the total number of linearly independent elements of $\{h_1, \ldots, h_m\}$ 
is maximum, i.e., equal to $d^2$ with $d$ being the dimension of the input Hilbert space. Under this condition we compute the exact value of $d_E^{\text{(min)}}$, showing that it is equal to $d^2$. As we shall see this corresponds to provide a spanning-set  purification of the whole algebra $\mathfrak{u}(d)$ in terms of the largest commutative subalgebra of $\mathfrak{u}(d^2)$. Finally in Sec.\ \ref{sec:Daniel}  we prove that it is always possible to obtain a generator purification of the algebra ${\mathfrak{u}(d)}$ with an extended space of dimension $d_E=d+1$, i.e., in terms of the largest commutative subalgebra of  ${\mathfrak{u}(d+1)}$.
Conclusions and perspectives are given in Sec.\ \ref{sec.con}, and the proof of a Theorem is presented in the Appendix.

\section{Definitions and Basic Properties} \label{sec1}
In this section we start by presenting a rigorous formalization of the problem and discuss some basic properties.
\begin{mydef}[\textbf{Hamiltonian purification}]
Let $\mathcal{S}  := \{ h_1, \ldots, h_m\}$ be a collection 
of $m$ self-adjoint operators (Hamiltonians) acting on a Hilbert space $\mathscr{H}_d$ of dimension $d$. Given then a collection $\mathcal{S}_\mathrm{ext}  :=  \{H_1,  \ldots, H_m\}$ of self-adjoint operators acting on an extended Hilbert space $\mathscr{H}_{d_E}$ which includes $\mathscr{H}_d$ as a proper subspace (i.e., $d_E = \dim\mathscr{H}_{d_E} \geq d$), we say that $\mathcal{S}_\text{ext}$ provides a \textit{purification} for $\mathcal{S}$ if all elements of $\mathcal{S}_\text{ext}$  \textit{commute} with each other, i.e., 
\begin{equation}
[H_j,H_{j'}] =0,   \quad \text{for all $j,j'$,} \label{commm}
\end{equation}
and are related to those of $\mathcal{S}$ as 
\begin{equation}
h_j = P H_j P,  \quad \text{for all $j$,} \label{eq1}
\end{equation}
where $P$ is the orthogonal projection onto $\mathscr{H}_{d}$
\cite{NOTA6}.
\end{mydef}

The requirement (\ref{commm}) that the operators   of   $\mathcal{S}_\text{ext}$ 
 are pairwise commuting implies that such a set spans an Abelian (i.e., commutative) subalgebra  of $\mathfrak{u}(d_E)$, and that
$H_j$ can be simultaneously diagonalized with a single unitary operator $U$ \cite{HJ12}, i.e.,
\begin{equation}\label{diago111} 
H_1 = U D_1 U^\dag, \ldots ,H_m = UD_m U^\dag,
\end{equation}
with $D_1, \ldots, D_m$  being real diagonal matrices.

By construction, it is  clear that each one of the elements of  $\mathcal{S}_\text{ext}$ in general depends upon \textit{all} the operators of 
the set  $\mathcal{S}$  which one wishes  to purify, and not just upon the one it extends.
Furthermore, if $h_j$ satisfy some special relations, identifying $\mathcal{S}_\text{ext}$ may be simpler than in the general case.
For instance, if all the elements of $\mathcal{S}$ admit a set of common eigenvectors,  they already commute in the subspace spanned by those eigenvectors. Then, we are left with the simpler problem of making the operators commute only on the complementary subspace. 
To keep the analysis as general as possible  we will not consider these special cases in the following. We will however make 
use of the linearity of  Eq.\ (\ref{eq1})  to simplify the analysis.

\begin{lem}\label{lemma1}
Let $\mathcal{S}=\{ h_1,  \ldots , h_m\}$ be a collection of self-adjoint operators acting on the Hilbert space $\mathscr{H}_d$ and suppose that a purifying set  $\mathcal{S}_\text{ext}  =  \{H_1, \ldots, H_m\}$ can be constructed on $\mathscr{H}_{d_E}$. Then:
\begin{enumerate}
\item Given $\mathcal{S}'=\{ h'_1, \ldots , h'_{m'}\}$ a collection of self-adjoint operators obtained by taking linear combinations of the elements of $\mathcal{S}$, i.e., 
\begin{equation}
h'_i = \sum_{j=1}^m \alpha_{i,j} h_j
\label{eqn:LinearComb}
\end{equation}
 with $\alpha_{i,j}$ being elements of a real rectangular  $m'\times m$  matrix, then a
 purifying set for $\mathcal{S}'$  on $\mathscr{H}_{d_E}$ is provided by $\mathcal{S}'_\text{ext}  =  \{H'_1,  \ldots, H'_{m'}\}$ with  elements
\begin{equation}
H'_i = \sum_{j=1}^m \alpha_{i,j} H_j  ;
\label{eqn:LinearCombExt}
\end{equation}

\item  Any subset of linearly independent elements of $\mathcal{S}$ corresponds to a subset of linearly independent elements in $\mathcal{S}_\text{ext}$ (the opposite statement being not true in general, i.e., linear independence among the elements of $\mathcal{S}_\text{ext}$ does not imply
  linear independence among the elements of $\mathcal{S}$);

\item For $\lambda_1, \ldots ,\lambda_m \in \mathds{R}$, calling $I_d$ the identity on $\mathscr{H}_d$ and $I_{d_E}$ the identity on $\mathscr{H}_{d_E}$, a purifying set for
\begin{equation}
\{ h_1 + \lambda_1 I_d, \ldots , h_m + \lambda_m I_d\}
\end{equation}
is given by
\begin{equation}
\{H_1 + \lambda_1 I_{d_E}, \ldots , H_m + \lambda_m I_{d_E}\}  ;
\end{equation}

\item For any unitary $U \in \mathcal{U}(d)$, setting $\widetilde{U} = U \oplus I_{d_E-d} \in \mathcal{U}(d_E)$, 
a purifying set for
\begin{equation}
\{U h_1 U^\dagger, \ldots , U h_m U^\dagger\}
\end{equation}
is given by
\begin{equation}
\{\widetilde{U} H_1 \widetilde{U}^\dagger , \ldots , \widetilde{U} H_m \widetilde{U}^\dagger\}  .
\end{equation}
\end{enumerate}
\end{lem}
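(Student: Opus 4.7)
The plan is to verify each of the four parts directly from the definition of Hamiltonian purification: for every candidate purifying family I need to check pairwise commutativity of its elements and the compression identity $P(\cdot)P$ with the corresponding target operator. The only technical tools required are the bilinearity of the commutator and the linearity of the compression map $A \mapsto PAP$; given these, each part reduces to a short algebraic calculation.

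Parts 1 and 3 are immediate. For part 1, bilinearity turns $[H'_i, H'_{i'}]$ into a double sum of vanishing commutators $[H_j, H_k]$, and linearity of the compression gives $P H'_i P = \sum_j \alpha_{i,j} h_j = h'_i$; self-adjointness is preserved since the $\alpha_{i,j}$ are real. For part 3 the identity commutes with everything, so commutativity is untouched, and $P I_{d_E} P$ is by definition the identity on $\mathscr{H}_d$, so each shift by $\lambda_j I_{d_E}$ compresses to a shift by $\lambda_j I_d$.

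Part 2 is again an easy consequence of linearity: any linear relation $\sum_j c_j H_j = 0$, sandwiched between $P$ and $P$, yields $\sum_j c_j h_j = 0$, so any linearly independent subset of $\mathcal{S}$ lifts to a linearly independent subset of $\mathcal{S}_\text{ext}$. To show that the converse fails I will exhibit a trivial example in which two distinct purifying operators supported entirely on the orthogonal complement of $\mathscr{H}_d$ both compress to the zero operator, so the two linearly independent lifts correspond to linearly dependent (indeed, identical) originals.

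Part 4 is the only one requiring a small computation. The crux is that the block structure $\widetilde{U} = U \oplus I_{d_E-d}$ gives the intertwining relation $P \widetilde{U} = U P$, since both sides agree with $U$ on $\mathscr{H}_d$ and annihilate its orthogonal complement; taking adjoints yields $\widetilde{U}^\dagger P = P U^\dagger$. Commutativity of the conjugated family is automatic since conjugation by a single unitary preserves commutators, and the compression identity follows from
\begin{equation}
P \widetilde{U} H_j \widetilde{U}^\dagger P = U P H_j P U^\dagger = U h_j U^\dagger.
\end{equation}
I do not anticipate any genuine obstacle; the only mildly subtle point is precisely the block-diagonal form of $\widetilde{U}$, which is what guarantees the intertwining relation — a generic element of $\mathcal{U}(d_E)$ would not preserve $\mathscr{H}_d$ and the compression identity would break down.
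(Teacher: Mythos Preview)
Your proposal is correct and follows the same direct-verification approach that the paper takes; the paper itself dismisses the lemma with the single line ``These facts are all trivially verified,'' so you have simply written out what they left implicit. Your handling of Part 4 via the intertwining relation (equivalently, the observation that $P$ and $\widetilde{U}$ commute because $\widetilde{U}$ is block-diagonal with respect to the decomposition $\mathscr{H}_d\oplus\mathscr{H}_d^\perp$) is exactly the right way to make that step precise.
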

\begin{proof}
These facts are all trivially verified. 
\end{proof}

Property 1 of Lemma \ref{lemma1} implies that a purifying set $\mathcal{S}_\text{ext}=\{H_1, \ldots, H_m\}$  can be extended by linearity to a purification of any linear combinations of the elements of $\mathcal{S}=\{ h_1, \ldots, h_m\}$. 
Accordingly we can say that the purification of  $\mathcal{S}$  by $\mathcal{S}_\text{ext}$ naturally induces a purification of the algebra spanned by the former by the algebra of the latter (more on this in Sec.\ \ref{SEC:al}).  It is also clear that the fundamental parameter of the Hamitonian purification problem is not the number of elements of $\mathcal{S}$ but instead the maximum number of linearly independent elements which can be found in $\mathcal{S}$. Therefore, without
loss of generality,  in the following  we will assume $m$ to coincide with such a number, i.e., that all the elements of $\mathcal{S}$  are linearly
independent. Then, by Property 2 of Lemma \ref{lemma1} 
also the elements of $\mathcal{S}_\text{ext}$ share the same property. By the same token, also the normalization of the operators $h_j$ can be fixed \textit{a priori}. Property 3 can be used instead to assume that all the elements of $\mathcal{S}$ be traceless (an option which we shall invoke from time to time to simplify the analysis).  Finally Property 4 can be exploited to arbitrarily fix a basis on $\mathscr{H}_d$, e.g., the one which diagonalizes the first element of $\mathcal{S}$.

As we shall see in the following sections the mere possibility of finding a purification for a generic set $\mathcal{S}$ can be easily proved. A less trivial issue  is to determine the \textit{minimal} dimension $d_E^{\text{(min)}}$ of the Hilbert space $\mathscr{H}_{d_E}$ which guarantees the existence of a purifying set for a generic collection $\mathcal{S}$ on $\mathscr{H}_d$. Clearly the value of $d_E^{\text{(min)}}$ will depend on the dimension  $d$ of the Hilbert space $\mathscr{H}_d$ and on the number of (linearly independent) elements $m$ of the set, i.e., $d_E^{\text{(min)}} = d_E^{\text{(min)}}(d,m)$.

By construction it is clear that this quantity cannot be smaller than 
$d$ and than $m$, i.e.,
\begin{equation} 
d_E^{\text{(min)}} \geq \max\{d, m\}. 
\end{equation}  
This is a simple relation which, on one side, follows from the observation 
that $ \mathscr{H}_{d_E}$ being an extension of $\mathscr{H}_d$ must have 
dimension $d_E$ at least as large as $d$. On the other side the inequality 
$d_E^{\text{(min)}} \geq m$ can be verified by exploiting the fact that the diagonal 
$d_E\times d_E$ matrices $D_j$ entering Eq.\ (\ref{diago111}) must be linearly 
independent in order to fulfill Property 2 of Lemma \ref{lemma1}. 
Actually for all non-trivial cases the inequality is strict, resulting in
\begin{equation} 
d_E^{\text{(min)}} \geq \max\{d+1,  m+1\} . \label{trivia} 
\end{equation}  
In fact when the initial Hamiltonians $\{h_1, \ldots ,h_m\}$ do not already 
commute, we need to expand the dimension of the space at least by one, 
obtaining $d_E^{\text{(min)}} \geq d+1$. Moreover the inequality $d_E \geq m+1$ always 
holds, unless the identity $I_d$ lies in the span of 
$\{h_1, \ldots ,h_m\}$. Suppose in fact that we can purify a set of $m$ 
linearly independent Hamiltonians in dimension $m$; then the linear 
span of the $m$ (linearly independent) diagonal matrices $D_j$ in 
Eq.\ (\ref{diago111}) includes also the identity matrix 
$I_{d_E}$. Because for any unitary $U$ we have 
$U I_{d_E} U^\dag = I_{d_E}$, 
the projection of $I_{d_E}$ on $\mathscr{H}_d$ gives 
the identity on that subspace, and in conclusion 
we have that ${I_d}\in \Span(h_1, \ldots ,h_m)$. 
Since this is not true in the general case, we obtain
$d_E^{\text{(min)}} \geq m+1$.

\subsection{Algebra purification} \label{SEC:al}
 As anticipated in the previous section the linearity property of the Hamiltonian purification scheme allows us to introduce the notion of purification of an algebra. Specifically
 there are at least two different possibilities:  
\begin{mydef}[\textbf{Purification(s) of an algebra}] \label{def2}
Let   $\mathfrak{a}$ be a {Lie} algebra of {self-adjoint} operators on $\mathscr{H}_d$. Given a commutative Lie algebra $\mathfrak{A}$ of {self-adjoint} operators on  $\mathscr{H}_{d_E}$ we  say that it provides 
\begin{enumerate}
\item a spanning-set purification   (or simply an algebra purification)   of   $\mathfrak{a}$ when we can provide an  Hamiltonian purification of a spanning set (e.g., a basis) of the latter  in $\mathfrak{A}$;
\item a generator purification of  $\mathfrak{a}$  when we can provide a Hamiltonian purification of a  generating set of the latter  in $\mathfrak{A}$.
\end{enumerate} 
\end{mydef}
The spanning-set purification typically requires the purification of more 
Hamiltonians  than the generator purification. For instance in Sec.\ \ref{sec5}
 we shall see  that the {(optimal)} spanning-set purification of $\mathfrak{u}(d)$ requires   
$\mathfrak{A}$ to be the largest commutative subalgebra of $\mathfrak{u}(d^2)$, while in Sec.\ \ref{sec:Daniel} we shall see the generator purification requires  $\mathfrak{A}$ to be the largest commutative subalgebra of $\mathfrak{u}(d+1)$.
  At the level of quantum control via the Zeno effect, the advantage posed by the spanning-set purification is associated with the fact that, 
  in contrast to the scheme based on generator purification, 
no complicated concatenation of Zeno pulses would be necessary to realize a desired control over a system on $\mathscr{H}_{d}$:
  any unitary operator $e^{-i ht}$ on the latter can in fact be simply obtained as in Eq.\ (\ref{zenosequence}) by choosing $H$ to be the linear combination of 
  commuting Hamiltonians which purifies $h$ on $\mathscr{H}_{d^2}$. 
 On the contrary, in the case of generator purification, first we have to decompose 
 $e^{-i ht}$ into a sequence of pulses of the form $e^{-i h_{j_N}t_N}\cdots  e^{-i h_{j_1}t_1}$ with $h_j$ being taken from the generator sets of
 operators for which we do have a purification. Then each of the pulses $e^{-i h_{j_k}t_k}$ entering the previous  decomposition is realized as in Eq.\ (\ref{zenosequence}) 
 with a proper choice of the purifying Hamiltonians. See Fig.\ \ref{fig:1}
  for a pictorial representation. 
  
 \begin{figure}[t]
    \centering
    \includegraphics[scale=.9]{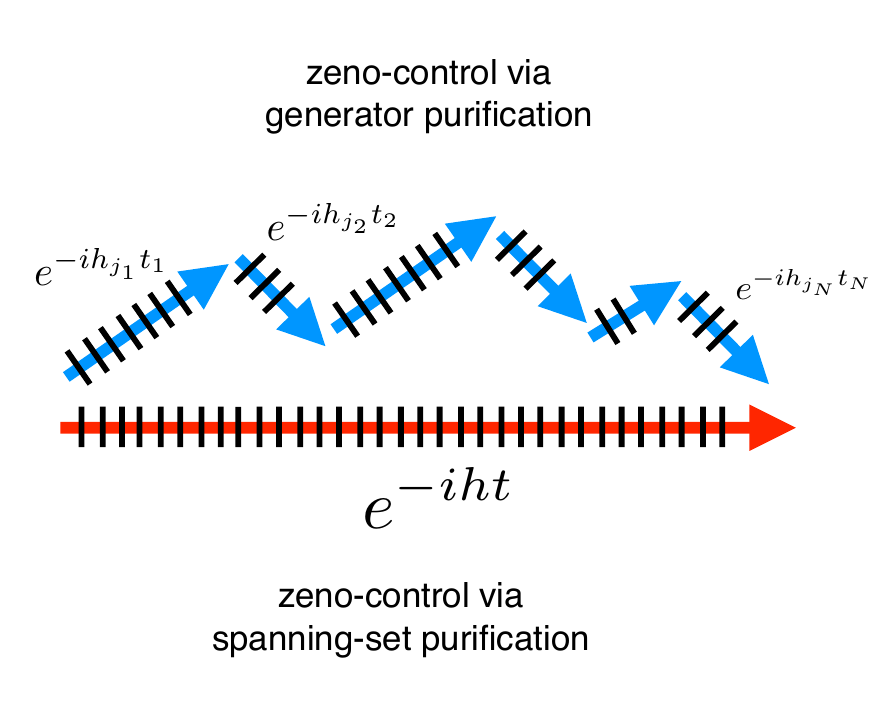}
    \caption{Pictorial representation of the control achieved via spanning-set purification (red line) and generator purification (blue lines) of an algebra. In the the former case an arbitrary unitary transformation  $e^{-i ht}$ on $\mathscr{H}_{d}$ is obtained via a single Zeno sequence (\ref{zenosequence}) with $H$ being the purification of $h$. For generator purification instead one has to use a collection of Zeno sequences, one for each of the generator pulses $e^{-i h_{j_k}t_k}$ which are needed to implement $e^{-iht}$. The black tick lines represent the iterated projections on the system. }
    \label{fig:1}
\end{figure}

\section{Purification of $\bm{m=2}$ Operators}\label{sec2}
In this section we discuss the case of the purification of two linearly independent Hamiltonians (i.e., $m=2$),  providing bounds 
 and exact solutions. In particular we first present a simple construction which shows how to purify into an
 extended Hilbert space $\mathscr{H}_{d_E}$ of dimension $d_E=2d$, implying hence $d_E^{\text{(min)}}(d,m=2)   \leq 2 d$ (Proposition \ref{prop1}).  
 Such a result is interesting because it is elegant and simple to prove. However it is certainly not the optimal. Indeed we will show that the following inequality always holds
 \begin{equation}\label{ris1}
 \left\lceil \frac{3d}{2}\right\rceil \leq d_E^{\text{(min)}}(d,m=2)  \leq 2 d-1.
 \end{equation} 
See Proposition \ref{lower_bound} (lower bound)  and Proposition \ref{purif2ops_2d-1} (upper bound).
For $d=2$ (qubit) and $d=3$ (qutrit)  this allows us to compute exactly $d_E^{\text{(min)}}(d,m=2) $, obtaining 
 respectively
 \begin{align}
&d_E^{\text{(min)}}(d=2,m=2) =3, \quad \text{for a qubit} ,\label{qubit} \\
&d_E^{\text{(min)}}(d=3,m=2) =5, \quad \text{for a qutrit} . \label{qutrit} 
 \end{align}  
For larger values of the system dimension $d$ there is a gap between the lower and upper bounds of the inequality (\ref{ris1}).
Numerical evidences conducted for $d=4,5,6$ however suggest that the former should be always  attainable. See Fig.\ \ref{fig:Graphics_5}.
 \begin{figure}[t]
    \centering
    \includegraphics[scale=.35]{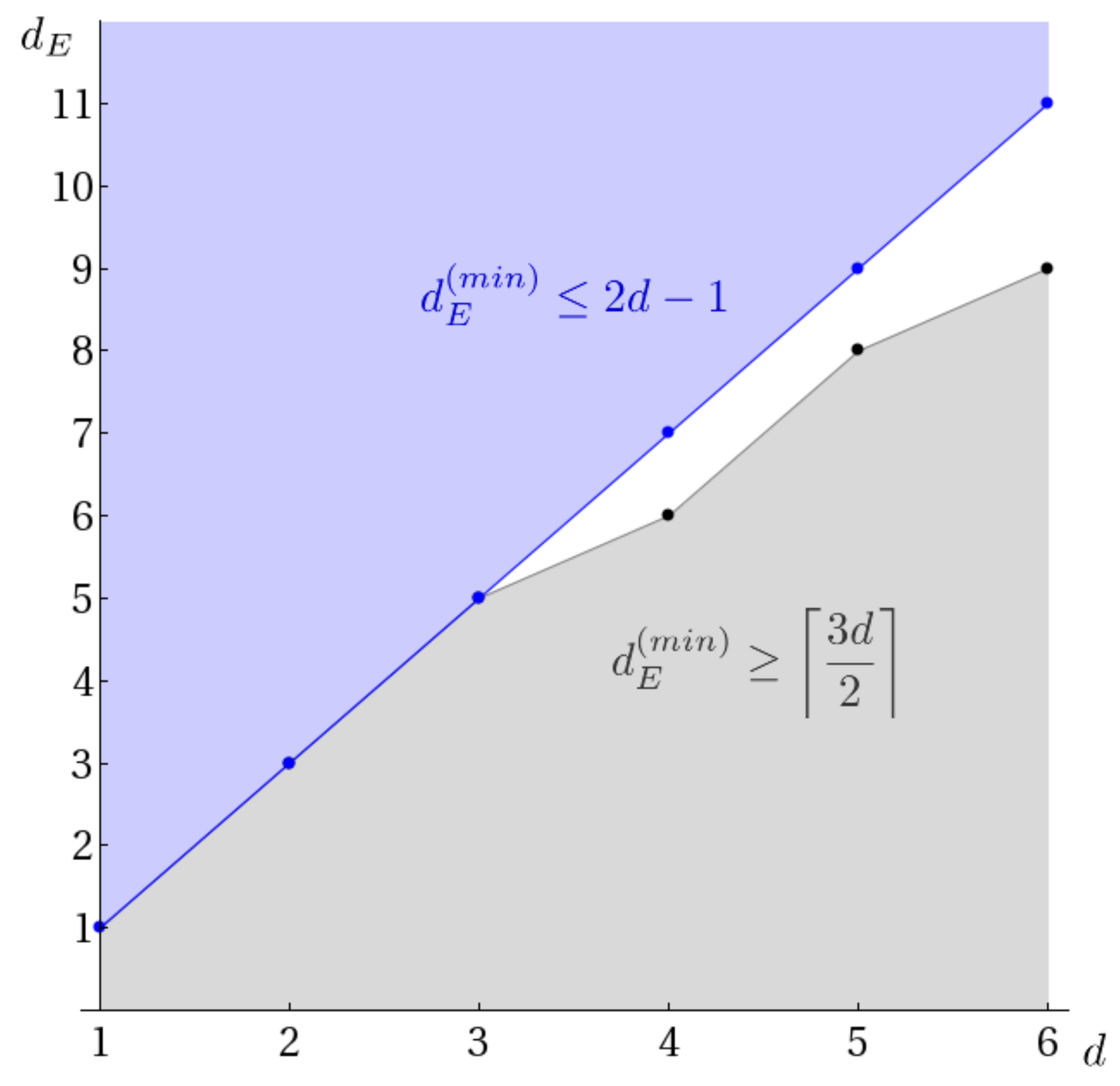}
    \caption{Graphical representation of the bounds (\ref{ris1}) for $d_E^{\text{(min)}}(d,m)$ for $m=2$ as a function of $d$. The blue points give the dimensions for which an explicit construction is known (Propositions \ref{purif_2ops_qubit} and \ref{purif2ops_2d-1}).
    The gray line gives the lower bound of Proposition \ref{lower_bound}, and the black points give the values of $d_E^{\text{(min)}}(d,2)$ estimated by numerical inspection.}
    \label{fig:Graphics_5}
\end{figure} 
\begin{prop}[\textbf{Purification of $\bm{m=2}$ operators with $\bm{d_E = 2  d}$}]\label{prop1}
Let $\mathcal{S}=\{ h_1, h_2\}$ be a collection  of two self-adjoint operators acting on the
Hilbert space $\mathscr{H}_d$. Then a purifying set can be constructed on $\mathscr{H}_{d_E} = \mathscr{H}_d\otimes \mathscr{H}_2$, with $\mathscr{H}_2$ being two-dimensional (qubit space) (i.e., $d_E = 2d$). In particular we can take
\begin{align}
H_1 &= h_1 \otimes I_2 + h_2 \otimes X , \nonumber \\
H_2 &= h_2 \otimes I_2 + h_1 \otimes X , \nonumber \\
P&= I_d \otimes (I_2 + Z)/2,
\end{align} 
where 
$X$ and $Z$ are  the Pauli operators on  $\mathscr{H}_2$~\cite{noteprop1}.
\end{prop}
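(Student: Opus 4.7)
The plan is straightforward verification: the proposition claims a specific ansatz, so I just need to check that (a) $H_1, H_2$ are self-adjoint, (b) they commute, and (c) their compression to the subspace $P\mathscr{H}_{d_E}$ reproduces $h_1, h_2$. Self-adjointness is immediate, since $h_1, h_2, I_2, X$ are all self-adjoint, and the tensor product and sum of self-adjoint operators is self-adjoint.

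For the commutator, I would expand
\begin{align}
H_1 H_2 &= (h_1 \otimes I_2 + h_2 \otimes X)(h_2 \otimes I_2 + h_1 \otimes X) \nonumber \\
&= h_1 h_2 \otimes I_2 + h_1^2 \otimes X + h_2^2 \otimes X + h_2 h_1 \otimes X^2,
\end{align}
and use the Pauli identity $X^2 = I_2$ to get $H_1 H_2 = (h_1 h_2 + h_2 h_1) \otimes I_2 + (h_1^2 + h_2^2) \otimes X$. By the manifest symmetry of this expression under exchange $h_1 \leftrightarrow h_2$ (which maps $H_1 \leftrightarrow H_2$), the same formula is obtained for $H_2 H_1$, hence $[H_1, H_2] = 0$. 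The key algebraic feature being exploited is that the ``off-diagonal'' qubit tensor factor $X$ squares to $I_2$, so the products of $h_1$ and $h_2$ with themselves (which in general do not commute with each other) only appear paired with $X^2 = I_2$ and sum symmetrically.

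For the projection condition, I would pick the eigenbasis $\{|0\rangle,|1\rangle\}$ of $Z$ with $Z|0\rangle = |0\rangle$, so that $(I_2 + Z)/2 = |0\rangle\langle 0|$ and hence $P = I_d \otimes |0\rangle\langle 0|$. Then
\begin{equation}
P H_j P = h_j \otimes |0\rangle\langle 0| \cdot I_2 \cdot |0\rangle\langle 0| + h_{3-j} \otimes |0\rangle\langle 0| X |0\rangle\langle 0|,
\end{equation}
and the second term vanishes because $\langle 0 | X | 0\rangle = 0$. Identifying $\mathscr{H}_d$ with $\mathscr{H}_d \otimes \mathrm{span}(|0\rangle) = P\mathscr{H}_{d_E}$, this reads $h_j = P H_j P$ as required.

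There is no real obstacle here, just bookkeeping: the content of the proposition is the clever ansatz itself, and the verification is essentially the two-line computation above. I would conclude by noting that $\dim \mathscr{H}_{d_E} = 2d$, establishing the bound $d_E^{\text{(min)}}(d,2) \leq 2d$.
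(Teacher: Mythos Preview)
Your proof is correct and follows essentially the same approach as the paper, which simply notes that the result ``easily follows from the properties of Pauli operators'' and then offers the equivalent block-matrix representation
\[
H_1=\begin{pmatrix} h_1 & h_2 \\ h_2 & h_1 \end{pmatrix},\quad
H_2=\begin{pmatrix} h_2 & h_1 \\ h_1 & h_2 \end{pmatrix},\quad
P=\begin{pmatrix} I_d & 0 \\ 0 & 0 \end{pmatrix}
\]
as an alternative way to see commutativity and the projection identity at a glance. Your explicit tensor expansion using $X^2=I_2$ and $\langle 0|X|0\rangle=0$ is exactly the Pauli-operator verification the paper alludes to.
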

\begin{proof}
The proof easily follows from the properties of Pauli operators.
But, to get a better intuition on what is going on, it is useful to adopt the following block-matrix representation for $H_j$ and $P$, i.e.,
\begin{equation}
H_1 =\left(  \begin{array}{cc}
h_1 & h_2 \\
h_2 & h_1 \end{array}
\right), \quad 
H_2 =\left(  \begin{array}{cc}
h_2& h_1 \\
h_1 & h_2 \end{array}
\right), \quad 
P =\left(  \begin{array}{cc}
I_d & 0 \\
0 & 0 \end{array}
\right),
\end{equation}
from which the commutativity is evident~\cite{NOTAREMARK}.
\end{proof}

As we shall see in the next section, Proposition \ref{prop1}  
admits a generalization for arbitrary values of $m$. Specifically, independently of the dimension of $\mathscr{H}_d$ (e.g., also for infinite-dimensional systems), we can construct a purification of $m$ not necessarily commuting Hamiltonians, by simply adding an $m$-level system to the original Hilbert space. In the case $\mathscr{H}_d$ is finite-dimensional this implies that a purification for $m$ Hamiltonians can always be achieved with an extended Hilbert space which has at most $m$ times the dimension of the original one, i.e., $d_E =  m d$.  This of course is not the best option. Indeed already for $d=2$ (qubit) and $m = 2$, it is possible to show (see Proposition \ref{purif_2ops_qubit} below) that the purification of two arbitrary Hamiltonians $h_1$ and $h_2$ is attained with a qutrit, i.e., $d_E= d+1= 3$, and this is clearly the optimal solution.

\begin{prop}[\textbf{Optimal purification of $\bm{m=2}$ operators of a qubit}]
\label{purif_2ops_qubit}
Let $\mathcal{S}=\{ h_1, h_2\}$ be a collection composed of  two self-adjoint operators acting on the Hilbert space $\mathscr{H}_2$ of a qubit. Then a purifying set can be constructed on the Hilbert space $\mathscr{H}_{d_E}$  of dimension $d_E=3$ (qutrit space).
\end{prop}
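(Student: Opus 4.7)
The plan is to reduce the problem to a canonical form using Lemma~\ref{lemma1} and then construct the purification explicitly with a $3\times 3$ block ansatz; the lower bound in Eq.~(\ref{trivia}) already gives $d_E^{\text{(min)}}(2,2) \geq 3$, so any such construction is automatically optimal. Properties~3 and~1 of Lemma~\ref{lemma1} let us subtract the traces of $h_1, h_2$ and replace $h_2$ by its Hilbert--Schmidt-orthogonal complement to $h_1$; Property~4 then diagonalizes $h_1$, and a further conjugation by a diagonal phase unitary (which commutes with $Z$, so it falls again under Property~4) rotates away the $Y$-component of $h_2$. Up to rescaling one reaches the canonical form $h_1 = \lambda_1 Z$, $h_2 = \lambda_2 X$ with $\lambda_1, \lambda_2 > 0$; if either vanishes the operators already commute and the statement is trivial.

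Next I would look for $H_1, H_2$ on $\mathscr{H}_3$ in the block form
\[
H_j = \begin{pmatrix} h_j & v_j \\ v_j^\dagger & w_j \end{pmatrix}, \qquad v_j \in \mathds{C}^2,\ w_j \in \mathds{R},
\]
with $P$ the projector onto the first two basis vectors, so that $P H_j P = h_j$ is automatic. Expanding $[H_1, H_2] = 0$ block-by-block yields three independent conditions: (i)~a top-left equation $[h_1, h_2] = v_2 v_1^\dagger - v_1 v_2^\dagger$; (ii)~an off-diagonal vector equation $(h_1 - w_1 I_2) v_2 = (h_2 - w_2 I_2) v_1$; and (iii)~a scalar reality condition $v_1^\dagger v_2 \in \mathds{R}$. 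For the canonical pair, $[h_1,h_2] = 2i\lambda_1\lambda_2 Y$ has rank~$2$, so (i) is the substantive requirement.

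Finally I would try the ansatz $w_1 = w_2 = 0$ with real $v_j$, which makes (iii) automatic and turns (ii) into two linear equations fixing the entries of $v_2$ in terms of those of $v_1$ (with proportionality factor $\lambda_2/\lambda_1$ and a swap-and-flip of coordinates). Substitution into (i) reduces to the single norm condition $\|v_1\|^2 = 2\lambda_1^2$, solved for instance by $v_1 = \sqrt{2}\,(\lambda_1, 0)^T$ with the corresponding $v_2 = \sqrt{2}\,(0, -\lambda_2)^T$; a direct matrix multiplication then verifies $[H_1, H_2] = 0$, and reversing the Lemma~\ref{lemma1} reductions produces a purification of the original pair. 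I expect the main obstacle to be ensuring that conditions (i)--(iii) are simultaneously solvable in the minimal dimension $d_E = 3$, since a parameter count alone does not guarantee existence; the explicit ansatz above confirms feasibility and reveals that on a qubit the rank-$2$ commutator and the two-dimensional off-diagonal blocks exactly match the available degrees of freedom, a balance that breaks down when $d > 2$ and is reflected in the weaker bounds of~(\ref{ris1}).
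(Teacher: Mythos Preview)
Your proof is correct and follows essentially the same approach as the paper: reduce via Lemma~\ref{lemma1} to a canonical Pauli-type pair and then exhibit explicit $3\times3$ commuting extensions. Your reduction is slightly tighter (you rotate away the $Y$-component so that $h_2\propto X$, whereas the paper keeps the phase $\theta$ and purifies $Z$ and $X\cos\theta+Y\sin\theta$ directly), and you \emph{derive} the off-diagonal blocks from the three commutator equations rather than simply verifying a stated ansatz, but the underlying construction is the same.
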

\begin{proof}
We prove the thesis by providing an explicit purification. To do so we first notice that, up to irrelevant additive and renormalization factors, the operators $h_1$ and $h_2$ can be expressed as 
\begin{equation}
h_1 = Z ,\qquad
h_2 = Z + \alpha (X \cos\theta   +Y \sin\theta ) , 
\end{equation} 
with $\alpha$ and $\theta$ being real parameters. 
Indicating then with  $\{ |0\rangle,|1\rangle\}$ the eigenvectors of $Z$, we define 
$\mathscr{H}_{d_E}$ as the space spanned by the vectors $\{ |0\rangle, |1\rangle, |2\rangle\}$ 
with $|2\rangle$ being an extra state which is assumed to be orthogonal to both $|0\rangle$ and $|1\rangle$. We hence introduce the operators on  $\mathscr{H}_{d_E}$ which in the basis 
$\{ |0\rangle, |1\rangle, |2\rangle\}$ have the following matrix form, 
\begin{align}
\widetilde{H}_1
&=\left(
\begin{array}{c|c}
Z&\begin{array}{c}0\\\sqrt{2}\end{array}\\
\hline
\begin{array}{cc}0&\sqrt{2}\end{array}&0 
\end{array}
\right), 
\nonumber\\
\widetilde{H}_2
&=\left(
\begin{array}{c|c}
M&\begin{array}{c}\sqrt{2}\,e^{-i\theta}\\0\end{array}\\
\hline
\begin{array}{cc}\sqrt{2}\,e^{i\theta}&0\end{array}&0 
\end{array}
\right), 
\label{eqn:PurificationPauli}
\end{align}  
with $M :=  X \cos\theta +  Y \sin\theta$. 
One can easily verify that they commute, $[ \widetilde{H}_1, \widetilde{H}_2] =0$, and when projected on  the subspace $\{ |0\rangle, |1\rangle\}$ they yield the matrices $Z$ and $M$, respectively. Defining hence $H_1$ and $H_2$ as the operators
\begin{equation} 
 H_1 = \widetilde{H}_1,\qquad
 H_2 = \widetilde{H}_1 + \alpha \widetilde{H}_2 ,
\end{equation} 
one notices that this is indeed a purifying set of $\mathcal{S}$.
\end{proof}

For arbitrary values of  $d$ an improvement with respect to Proposition \ref{prop1} is obtained as follows:
\begin{prop}[\textbf{Purification of $\bm{m=2}$ operators with $\bm{d_E = 2  d-1}$}]\label{purif2ops_2d-1}
Let $\mathcal{S}=\{ h_1, h_2\}$ be a collection composed of two self-adjoint operators acting on the Hilbert space $\mathscr{H}_d$. Then, a purifying set can be constructed on $\mathscr{H}_{d_E} = \mathscr{H}_{2d-1}$, implying hence $d_E^{\text{(min)}}(d, m=2) \leq 2 d-1$.
\end{prop}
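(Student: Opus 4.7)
The proof is by explicit construction of the purifying pair $(H_1, H_2)$ on $\mathscr{H}_{2d-1}$, generalizing the qutrit construction of Proposition~\ref{purif_2ops_qubit}.

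First I would exploit Lemma~\ref{lemma1} to put $\{h_1, h_2\}$ into a convenient canonical form: by Property~4 I may assume $h_1$ is diagonal, $h_1 = \mathrm{diag}(a_1, \ldots, a_d)$ in the working basis of $\mathscr{H}_d$; and by Properties~1 and~3 I may replace $h_2$ by $h_2 + c_0 h_1 + c_1 I_d$ for any $c_0, c_1 \in \mathds{R}$, stripping off components along $h_1$ and $I_d$ if this simplifies the construction.

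Next, I would decompose the extended space as $\mathscr{H}_{d_E} = \mathscr{H}_d \oplus \mathscr{H}_{d-1}$ and seek commuting extensions in the block form
\begin{equation*}
H_j = \begin{pmatrix} h_j & C_j \\ C_j^\dagger & D_j \end{pmatrix}, \qquad j = 1, 2,
\end{equation*}
with $C_j$ a $d \times (d-1)$ complex matrix and $D_j$ a Hermitian $(d-1)\times(d-1)$ matrix. The projection condition $P H_j P = h_j$ is then automatic, and the commutation $[H_1, H_2] = 0$ reduces to three coupled matrix equations,
\begin{align*}
[h_1, h_2] &= C_2 C_1^\dagger - C_1 C_2^\dagger, \\
h_1 C_2 - h_2 C_1 &= C_2 D_1 - C_1 D_2, \\
[D_1, D_2] &= C_2^\dagger C_1 - C_1^\dagger C_2,
\end{align*}
one per block of the commutator. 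The final step is to exhibit an explicit solution, generalizing the qutrit case: there, a single ancilla is coupled by $\widetilde{H}_1$ to the $|1\rangle$ eigenstate of $h_1$ and by $\widetilde{H}_2$ to the $|0\rangle$ eigenstate so as to produce a common null eigenvector of $\widetilde{H}_2$ that lives in a degenerate eigenspace of $\widetilde{H}_1$. The natural generalization uses $d-1$ ancilla states, with $C_1$ engineered to produce $d-1$ extra eigenvalue coincidences of $H_1$ and $C_2$ arranged so that the off-diagonal action of $h_2$ in the $h_1$-eigenbasis is mediated through those coincident eigenspaces; the Hermitian matrices $D_1, D_2$ are then pinned down by the remaining compatibility equations.

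The main obstacle is executing this construction uniformly for arbitrary $h_1, h_2$. A parameter count is encouraging---the unknowns $(C_1, C_2, D_1, D_2)$ carry $2(d-1)(3d-1)$ real degrees of freedom, while the three block equations impose $(2d-1)^2$ real constraints, leaving a positive slack $2d^2 - 4d + 1$ for every $d \geq 2$---but the first equation is genuinely restrictive: it demands that the skew-Hermitian $d\times d$ matrix $[h_1,h_2]$, generically of rank $d$, be written as $C_2 C_1^\dagger - C_1 C_2^\dagger$ with the $C_j$ only $d \times (d-1)$, and simultaneously the second and third equations must be satisfied. A careful choice is needed so that these three blocks can all be solved at once, and the edge case in which the spectrum of $h_1$ is degenerate may need to be treated separately.
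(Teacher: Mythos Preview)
Your setup is sound and the three block-commutator equations you write down are exactly the right ones (the paper uses the same decomposition in the proof of Proposition~\ref{lower_bound}). But the proposal stops short of an actual proof: you describe a ``natural generalization'' of the qutrit construction without ever specifying $C_1, C_2, D_1, D_2$, and you yourself flag that ``executing this construction uniformly for arbitrary $h_1,h_2$'' is the main obstacle. A parameter count is not a proof; the system is quadratic and coupled, and nothing you have written guarantees a solution for every pair $(h_1,h_2)$. The vague prescription of ``$d-1$ extra eigenvalue coincidences of $H_1$'' with $C_2$ ``mediating'' the off-diagonal action of $h_2$ is not an algorithm, and it is not clear it can be made to work.

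The paper sidesteps the coupled block equations entirely by working in the simultaneously-diagonal picture of Eq.~(\ref{diago111}): one seeks a unitary $U\in\mathcal{U}(2d-1)$ and real diagonal $D_1,D_2$ with $h_j = PUD_jU^\dagger P$. Writing $PU=(L\mid R)$ with $L$ of size $d\times d$ and $R$ of size $d\times(d-1)$, the conditions become $h_j = LD_j^L L^\dagger + RD_j^R R^\dagger$ together with the isometry constraint $LL^\dagger+RR^\dagger=I_d$. The key idea---which your proposal is missing---is to shift $h_1$ to be positive definite and take $L=\lambda^{-1}\sqrt{h_1}\,V$ for a unitary $V$ still free. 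Then $D_1^L=\lambda^2 I_d$, $D_1^R=0$ solves the $h_1$ equation for \emph{any} $V$; choosing $\lambda^2=\max\sigma(h_1)$ makes $I_d-\lambda^{-2}h_1$ positive semidefinite of rank at most $d-1$, so $R$ exists; and the remaining freedom in $V$ is spent diagonalizing $\lambda^2 h_1^{-1/2}h_2 h_1^{-1/2}$, which solves the $h_2$ equation with $D_2^R=0$. The three conditions are thus decoupled and solved in sequence, and the construction is fully explicit.
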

\begin{proof}
According to Eq.\ (\ref{diago111}) to construct a purifying set we have to find a unitary matrix $U \in \mathcal{U}(2d-1)$ such that 
\begin{align}
h_1 = PUD_1 U^\dagger {P}, \nonumber\\
h_2 = PUD_2 U^\dagger {P}  ,
\label{eqh1h2}
\end{align}
with $D_1,D_2 \in \Diag(2d-1)$ being real diagonal matrices of dimension $2d-1$. 
In $\mathscr{H}_{d_E}= \mathscr{H}_{d} \oplus \mathscr{H}_{d-1}$ we can write
\begin{equation}
P = \left(\begin{array}{c|c}
	I_d & 0 \\
	\hline
	0 & 0_{d-1}
	\end{array}\right), \qquad
PU = \left(\begin{array}{c|c}
	L & R \\
	\hline
	0 & 0_{d-1}
\end{array}\right),
\end{equation}
where $L$ is a $d \times d$ matrix, $R$ is a $d \times (d-1)$ matrix, and the rows of $PU$ are orthogonal to each other, $L L^\dagger + R R^\dagger = I_d$, since $PU U^\dag P=P$.
We then write
\begin{equation}
D_1 = \left(\begin{array}{c|c}
	D_1^L & 0 \\
	\hline
	0 & D_1^R
	\end{array}\right)
,\quad
D_2 = \left(\begin{array}{c|c}
	D_2^L & 0 \\
	\hline
	0 & D_2^R
\end{array}\right),
\end{equation}
where $D_1^L, D_2^L$ are diagonal $d \times d$ matrices and $D_1^R, D_2^R$ are diagonal $(d-1) \times (d-1)$ matrices. Then we notice that the equations in \eqref{eqh1h2} are equivalent to
\begin{gather}
h_1 = L D_1^L L^\dagger + R D_1^R R^\dagger, \nonumber\\
h_2 = L D_2^L L^\dagger + R D_2^R R^\dagger ,\nonumber\\
L L^\dagger + R R^\dagger = I_d.
\label{eqComplete}
\end{gather}
To find the purification, we need to solve these equations.

\textit{First equation:} 
we choose without loss of generality $h_1$ to be positive definite: this can be obtained by adding $\alpha  I_d$ with $\alpha > - \min\sigma(h_1)$ [where $\sigma(X)$ denotes the spectrum of $X$].
Then, $\sqrt{h_1}$ is the Hermitian positive-definite matrix such that $(\sqrt{h_1})^2 = h_1$. We also choose
\begin{equation}
L = \frac{1}{\lambda} \sqrt{h_1}\,V,
\end{equation}
where $V$ is an arbitrary unitary matrix, $VV^\dagger=I_d$.
Notice that for any unitary $V$ we have $\lambda^2 L L^\dagger = \sqrt{h_1}\,VV^\dagger \sqrt{h_1} = h_1$. Accordingly  to solve the first of the equations in \eqref{eqComplete} we can simply take $D_1^L = \lambda^2 I_d$ and $D_1^R = 0$.

\textit{Third equation:} recast the third equation in the form
\begin{equation}
	R R^\dagger = I_d - L L^\dagger = I_d -\frac{1}{\lambda^2}h_1  .
\end{equation}
This equation can be solved for $R$ if and only if the right-hand side is a positive semi-definite matrix with non-null kernel. This can be accomplished by choosing $\lambda^2  :=  \max\sigma (h_1)$, so that the smallest eigenvalue of $I_d - \lambda^{-2}h_1$ is equal to zero (this is easily seen in the basis in which $h_1$ is diagonal) \cite{NOTA5}. Explicitly, we can write
\begin{align}
I_d -\frac{1}{\lambda^2} h_1
& = W
\left(\begin{array}{c|c}
D' & 0\\
\hline
0  & 0
\end{array}\right)
W^\dagger \nonumber \\
& = 
\left(\begin{array}{c|c}
W' & 0 
\end{array}\right)
\left(\begin{array}{c|c}
D' & 0 \\
\hline
0  & 0
\end{array}\right)
\left(\begin{array}{c}
W'^\dagger \\
\hline
0
\end{array}\right),
\end{align}
where $W$ and $D'$ are obtained with the spectral theorem and $W'$ is a $d \times (d-1)$ matrix obtained from $W$ deleting its last column. So a solution to the third equation is given by $R = W'\sqrt{D'}$.

\textit{Second equation:} we exploit the fact that $V$ is so far an arbitrary unitary matrix. We take $D_2^R = 0$, and then we are left with
\begin{equation}
h_2 = \frac{1}{\lambda^2} \sqrt{h_1}\,VD_2^L V^\dagger\sqrt{h_1},
\end{equation}
or equivalently
\begin{equation}
\lambda^2 h_1^{-1/2} h_2 h_1^{-1/2}= VD_2^L  V^\dagger ,
\end{equation}
which can be solved for $V$ and $D_2^L$ using the spectral theorem.

In conclusion, the explicit purification of $h_1$ and $h_2$, with $h_1$ positive definite, are found by extending 
\begin{equation}
PU = \left(\begin{array}{c|c}
\lambda^{-1}\sqrt{h_1}\,V&W'\sqrt{D'}\\
\hline
0 & 0_{d-1}
\end{array}\right)
\end{equation} 
to a unitary matrix $U$ and then expressing $H_1$ and $H_2$ as
\begin{equation}
H_1 = 
U
	\left(\begin{array}{c|c}
	\lambda^2 I_d & 0 \\
	\hline
	0  			& 0_{d-1}
	\end{array}\right)
U^\dagger  ,\quad
H_2  = 
U
	\left(\begin{array}{c|c}
	D_2^L & 0 \\
	\hline
	0  	  & 0_{d-1}
	\end{array}\right)
U^\dagger.
\end{equation}
\end{proof}

\begin{prop}[\textbf{Lower bound on the purification of $\bm{m=2}$ operators}]
\label{lower_bound}
The minimum dimension $d_E^{\text{(min)}}$ of the extended space on which it is possible to purify an arbitrary set of two Hamiltonians $\{ h_1, h_2\}$ acting on $\mathscr{H}_d$ is greater or equal to $3d/2$, i.e., $d_E^{\text{(min)}}(d,m=2)  \geq  \lceil 3d/2\rceil$.
\end{prop}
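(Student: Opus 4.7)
The plan is to translate the purification problem for $m=2$ Hamiltonians into a classical normal-matrix dilation problem and then exploit a rank inequality on the self-commutator. First I observe that two self-adjoint operators $H_1, H_2$ on $\mathscr{H}_{d_E}$ commute if and only if $N := H_1 + i H_2$ is a normal matrix (since $[N, N^*] = -2i[H_1, H_2]$), and that the compression conditions $P H_j P = h_j$ are equivalent to $P N P = A$, where $A := h_1 + i h_2$. Hence a purification of $\{h_1, h_2\}$ in dimension $d_E$ exists if and only if $A$ is the top-left $d \times d$ block of some normal $d_E \times d_E$ matrix.

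Next I would exploit the block structure of normal dilations. In the splitting $\mathscr{H}_{d_E} = \mathscr{H}_d \oplus \mathscr{H}_d^\perp$ write
\begin{equation}
N = \begin{pmatrix} A & B \\ C & D \end{pmatrix},
\end{equation}
with $B$ of size $d \times (d_E - d)$ and $C$ of size $(d_E - d) \times d$. Reading off the $(1,1)$ block of the normality relation $N N^* = N^* N$ gives the identity
\begin{equation}
[A, A^*] = C^* C - B B^*,
\end{equation}
which realizes the self-commutator of $A$ as the difference of two positive semidefinite $d \times d$ matrices each of rank at most $d_E - d$. A standard consequence of the Courant--Fischer min-max principle (applied on $\ker(BB^*)$ and on $\ker(C^*C)$) is that the number $p$ of strictly positive and the number $q$ of strictly negative eigenvalues of $[A, A^*]$ satisfy $p \leq d_E - d$ and $q \leq d_E - d$.

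Finally, I compute $[A, A^*] = -2 i [h_1, h_2]$, which is a traceless Hermitian $d \times d$ matrix, and exhibit a pair $\{h_1, h_2\}$ (a generic pair, or an explicit one built from generalized Pauli-type operators) for which this commutator has full rank $d$. Tracelessness of a rank-$d$ Hermitian matrix forces $p, q \geq 1$, and $p + q = d$ then implies $\max(p, q) \geq \lceil d/2 \rceil$; combined with $d_E - d \geq \max(p, q)$ this yields $d_E \geq \lceil 3 d / 2 \rceil$ for this specific pair, hence the claimed lower bound on $d_E^{\text{(min)}}(d, m=2)$. The main technical point is the rank--signature lemma for differences of PSD matrices; everything else is a direct computation, and the existence of pairs with full-rank $i[h_1, h_2]$ for every $d \geq 2$ follows by genericity since the locus of rank-deficient commutators is a proper algebraic subset of pairs of Hermitian matrices.
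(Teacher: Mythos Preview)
Your proof is correct and reaches the stated bound, but by a genuinely different route from the paper. The paper works directly with the block form of the Hermitian extensions $H_1,H_2$, reads off the $(1,1)$ block of $[H_1,H_2]=0$ as $[h_1,h_2]=-B_1B_2^\dagger+B_2B_1^\dagger$, and then applies plain rank subadditivity to get $\rank([h_1,h_2])\le 2(d_E-d)$; choosing a pair with full-rank commutator finishes the argument. You instead package the pair into the single operator $N=H_1+iH_2$, recasting the question as a normal dilation of $A=h_1+ih_2$, and extract the same $(1,1)$-block information in the form $[A,A^*]=C^*C-BB^*$, a difference of two \emph{positive semidefinite} matrices. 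The Courant--Fischer signature lemma you then invoke is strictly sharper than rank subadditivity: it bounds the numbers $p$ and $q$ of positive and negative eigenvalues of $-2i[h_1,h_2]$ \emph{separately} by $d_E-d$, whereas the paper only bounds their sum $p+q$ by $2(d_E-d)$. For the stated Proposition both routes yield $d_E\ge\lceil 3d/2\rceil$ once a pair with full-rank commutator is chosen, but your framework in principle delivers more: applied to a pair whose self-commutator has unbalanced signature, say $(d-1,1)$, it would force $d_E\ge 2d-1$, matching the upper bound of Proposition~\ref{purif2ops_2d-1}. Your approach also has the pleasant side effect of linking Hamiltonian purification to the classical normal-completion problem in operator theory.
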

\begin{proof}
We want to find $H_1$ and $H_2$,
\begin{equation}
H_1 = 
\left(\begin{array}{c|c}
h_1&B_1 \\
	\hline
B_1^\dagger&C_1
\end{array}\right), \qquad
H_2 = 
\left(\begin{array}{c|c}
	h_2& B_2 \\
	\hline
	B_2^\dagger& C_2
\end{array}\right),
\end{equation}
such that $[H_1,H_2]=0$. Writing the commutators in block form, we obtain
the following three equations 
\begin{gather}
[h_1,h_2] = - B_1 B_2^\dagger + B_2 B_1^\dagger, \nonumber\\
h_1B_2 - h_2B_1 = - B_1C_2 + B_2C_1 ,\nonumber\\
B_1^\dagger B_2 - B_2^\dagger B_1 = -[C_1, C_2] .
\end{gather}
Actually in order to prove the thesis, we need to consider just the first of these equations. In general $[h_1, h_2]$ can be of maximal rank, i.e., of rank $d$ \cite{NOTE2}. On the other hand $B_1$ and $B_2$ have ranks at most equal to $d_E-d$ (the number of their columns), and so $- B_1 B_2^\dagger + B_2 B_1^\dagger$ has rank at most equal to $2d_E-2d$. Therefore we have to impose $d = \rank(- B_1 B_2^\dagger + B_2 B_1^\dagger) \leq 2d_E - 2d$, which implies $d_E \geq {3}d/2$. 
\end{proof}

For $d=2$ the lower bound of Proposition \ref{lower_bound} is trivial as it only predicts that the minimal value $d_E^{\text{(min)}}$ should be 3 which is the smallest dimension we can hope for to construct a space $\mathscr{H}_{d_E}$ that admits a proper bi-dimensional subspace. 
In Proposition \ref{purif_2ops_qubit} we have explicitly provided a purification for the case  $m=2$ and $d=2$, which uses exactly  $d_E=3$, proving hence that the inequality of Proposition \ref{lower_bound} is tight at least in this case. The same result holds for $d=3$, as it is clear by comparing Proposition \ref{lower_bound} with Proposition \ref{purif2ops_2d-1}, yielding Eq.\ (\ref{qutrit}).

\section{An Upper Bound for $\bm{d_E^{\text{(min)}}(d,m)}$ for  arbitrary $\bm{m}$ and $\bm{d}$}  \label{sec.4}
Here we provide an explicit construction which  generalizes Proposition \ref{prop1} to the case in which $\mathcal{S}$ is composed of $m\geq 2$ linearly independent elements and allows us to prove the following upper bound 
\begin{eqnarray}
d_E^{\text{(min)}}(d,m) \leq m d . \label{ineqimpo}
\end{eqnarray} 
While it is not tight [e.g., see Propositions \ref{purif_2ops_qubit} and \ref{purif2ops_2d-1} as well as Eq.\ (\ref{RISUL1}) below] this bound most likely gives the proper scaling in terms of the parameter $d$ at least for small  values of $m$.

\begin{theo}[\textbf{Purification of $\bm{m}$ operators with $\bm{d_E = md}$}]
\label{theo_1}\label{teo1} Let $\mathcal{S}=\{ h_1, \ldots ,h_m\}$ be a collection of self-adjoint operators acting on the Hilbert space $\mathscr{H}_d$. Then, a purifying set can be constructed on $\mathscr{H}_{d_E} = \mathscr{H}_d\otimes \mathscr{H}_m$, 
implying hence Eq.\ (\ref{ineqimpo}).
\end{theo}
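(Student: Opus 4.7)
The plan is to generalize the construction of Proposition \ref{prop1} (which relied on the commuting pair $I_2, X$ on a two-dimensional ancilla) to an arbitrary $m$-dimensional ancilla by means of the regular representation of the cyclic group $\mathbb{Z}_m$. First I would introduce the cyclic shift $C\in\mathcal{U}(m)$ defined on an orthonormal basis $\{|0\rangle,|1\rangle,\ldots,|m-1\rangle\}$ of $\mathscr{H}_m$ by $C|k\rangle=|(k+1)\bmod m\rangle$. The powers $I_m,C,C^2,\ldots,C^{m-1}$ automatically commute (they all lie in the abelian algebra generated by $C$) and obey the group law $C^k C^{k'}=C^{(k+k')\bmod m}$. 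Relabelling the Hamiltonians as $h_0,h_1,\ldots,h_{m-1}$ for notational convenience, I would set
\[
H_j := \sum_{k=0}^{m-1} h_{(j+k)\bmod m}\otimes C^k,\qquad j=0,1,\ldots,m-1,
\]
and take $P := I_d\otimes|0\rangle\langle 0|$ as the projection onto the embedded subspace $\mathscr{H}_d\otimes|0\rangle\cong\mathscr{H}_d$.

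Two properties then need to be verified. For the commutativity $[H_j,H_{j'}]=0$, I would expand $H_j H_{j'}$ and collect the coefficient of $C^n$, obtaining
\[
\sum_{k+k'\equiv n\,(\text{mod}\,m)} h_{(j+k)\bmod m}\,h_{(j'+k')\bmod m} \;=\; \sum_{l=0}^{m-1} h_l\,h_{(j+j'+n-l)\bmod m},
\]
where the second equality follows from the substitution $l:=(j+k)\bmod m$. The right-hand side is manifestly symmetric under $j\leftrightarrow j'$, so $H_j H_{j'}=H_{j'}H_j$. For the projection condition, the identity $\langle 0|C^k|0\rangle=\delta_{k,0}$ implies that only the $k=0$ summand survives the sandwich, giving $PH_jP=h_j\otimes|0\rangle\langle 0|$, which, under the canonical identification $\mathscr{H}_d\cong\mathscr{H}_d\otimes|0\rangle$, coincides with $h_j$.

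No genuine obstacle arises: the whole argument is algebraic, and the only nontrivial ingredient is the convolution-style bookkeeping in the commutativity check, which ultimately reflects the commutativity of the group algebra of $\mathbb{Z}_m$. As a consistency check, for $m=2$ one has $C=X$, and the definition of $H_j$ reproduces exactly the formulas of Proposition \ref{prop1}. Finally, the bound $d_E^{\text{(min)}}(d,m)\leq md$ is immediate from $\dim(\mathscr{H}_d\otimes\mathscr{H}_m)=md$.
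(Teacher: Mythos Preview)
Your construction is elegant and does generalize Proposition~\ref{prop1} at the level of commutativity and the projection property, but it misses a requirement of the definition: the purifying operators $H_j$ must be \emph{self-adjoint}. For $m\ge 3$ the cyclic shift $C$ is unitary but not Hermitian ($C^\dagger=C^{m-1}\ne C$), so
\[
H_j=\sum_{k=0}^{m-1} h_{(j+k)\bmod m}\otimes C^k
\]
is in general not Hermitian either. Concretely, for $m=3$ the block form of $H_0$ is
\[
H_0=\begin{pmatrix} h_0 & h_2 & h_1\\ h_1 & h_0 & h_2\\ h_2 & h_1 & h_0\end{pmatrix},
\]
which is Hermitian only if $h_1=h_2$. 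The obvious patch of replacing each $H_j$ by its Hermitian part $\tfrac12(H_j+H_j^\dagger)$ keeps the projection identity (since $PH_j^\dagger P=(PH_jP)^\dagger=h_j\otimes|0\rangle\langle0|$) but destroys commutativity: the cross term $[H_j,H_{j'}^\dagger]$ contributes, in the $C^n$-coefficient, the sum $\sum_{p}[h_p,h_{p+a}]$ with $a=j'-j-n$, which does not vanish for generic $h_0,\ldots,h_{m-1}$. Thus the cyclic-group-algebra idea, while natural, does not by itself produce a purifying set when $m\ge3$.

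The paper sidesteps this obstruction by building the $H_i$ directly in diagonalized form. Writing $h_i=U_iD_iU_i^\dagger$, it introduces the partial isometry $W=\tfrac{1}{\sqrt m}\sum_i U_i\otimes|f_1\rangle\langle f_i|$, extends it to a unitary $U$ via $W=PU$, and sets $H_i=U\,(mD_i\otimes|f_i\rangle\langle f_i|)\,U^\dagger$. Since the factors $mD_i\otimes|f_i\rangle\langle f_i|$ are real diagonal, the $H_i$ are automatically Hermitian and pairwise commuting, and a short computation gives $PH_iP=h_i\otimes f_1$.
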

\begin{proof}
We work in a fixed orthonormal basis, in which $\{|e_1\rangle, \ldots , |e_d\rangle\}$ span $\mathscr{H}_d$, $\{|f_1\rangle, \ldots , |f_m\rangle\}$ span $\mathscr{H}_m$, and thus $\{|e_\ell\rangle \otimes |f_i\rangle\}_{\ell \in \{1,\ldots, d\}, i \in \{1,\ldots, m\}}$ span the extended space $\mathscr{H}_{d_E} = \mathscr{H}_d\otimes \mathscr{H}_m$. We then use the spectral theorem to write $h_i = U_i D_iU_i^{\dagger}$, $\forall i$, with $D_i$ and $U_i$ being operators which, in the orthonormal basis $\{|e_1\rangle, \ldots , |e_d\rangle\}$,   are described by 
diagonal and unitary matrices, respectively.  
A purifying set can then be assigned  by  introducing 
the following operator in  $\mathscr{H}_{d_E}$ 
\begin{equation}
W  := \frac{1}{\sqrt{m}} \sum_{i=1}^m U_i \otimes f_{1i},
\end{equation} 
where $f_{ij} := |f_i\rangle\langle f_j|$, $f_i := f_{ii}=|f_i\rangle\langle f_i|$. One gets 
\begin{equation}
W W^\dag = \frac{1}{m} \sum_{i,j=1}^m  U_i U_j^\dag \otimes f_{1i} f_{j1} = I_d \otimes f_{1} =: P.
\end{equation}
Therefore, $W$ is a partial isometry in $\mathscr{H}_{d_E}$ and $P$ is the orthogonal projection onto its range $\mathscr{H}_d\otimes \mathbb{C} |f_1\rangle\cong  \mathscr{H}_d$. Now consider its polar decomposition $W=P U$ for some (non-unique) unitary $U$  on $\mathscr{H}_{d_E}$. [In terms of representative matrices in the canonical basis  the projection $P$ selects the first $d$ rows of an arbitrary $md\times md$ matrix. Therefore,
 since the first $d$ rows of $W$ are orthonormal they can be extended to build up a unitary matrix $U \in \mathcal{U}(md)$, such that $W=PU$].
By explicit computation one can then observe that the following identity holds:
\begin{equation}
h_i \otimes f_1 = PU(m D_i \otimes f_i)U^{\dagger} P.
\end{equation} 
Accordingly the purifying set can be identified with the operators 
 $H_i =U (m D_i \otimes f_i)U^\dag$.
\end{proof}

\section{Optimal purification  of the whole algebra $\bm{(m=d^2)}$}\label{sec5}
In this section we focus on the case where the set  $\mathcal{S}$ one wishes to purify is large enough to span the whole algebra $\mathfrak{u}(d)$  of $\mathscr{H}_d$, i.e., 
according to Definition \ref{def2}, we study the spanning-set purification problem of $\mathfrak{u}(d)$.
 This corresponds to having $m=d^2$ linearly independent elements in $\mathcal{S}$ (the maximum allowed by the dimension of the Hilbert space of the problem). It turns out that for this special case $d_E^{\text{(min)}}$ can be computed exactly showing that it saturates the bound of Eq.\ (\ref{trivia}), i.e., 
\begin{equation} 
d_E^{\text{(min)}}(d, m=d^2) = d^2 .\label{RISUL1}
\end{equation} 
On one hand this incidentally confirms that the bound of Theorem \ref{theo_1} is not thight. On the other hand it shows that a spanning-set purification for $\mathfrak{u}(d)$
requires the largest commutative subalgebra of $\mathfrak{u}(d^2)$ as minimal purifying algebra.

We start by proving this result for the case of  $n$ qubits  (i.e., $d=2^n$), as this special case admits a simple analysis 
(see Proposition \ref{Sigma_l} and  Corollary \ref{tensor_purif}).
The case of arbitrary $d$ is instead discussed in Theorem \ref{theo_2} by presenting a construction which allows one to purify an arbitrary set of $m=d^2$ linearly independent Hamiltonians in an extended Hilbert space of dimension $d^2$. Finally in Theorem \ref{Almost_all} we prove  that the explicit solution proposed in  Theorem \ref{theo_2} is far from being unique.

\begin{prop}[\textbf{Optimal purification of $\bm{\mathfrak{u}(2)}$}]
\label{Sigma_l}
A spanning-set purification for the algebra of  $\mathfrak{u}(2)$
can be constructed on an extended Hilbert space of dimension $d_E=4$, i.e.,  $\mathscr{H}_{d_E}  = \mathscr{H}_4$.
This is the optimal solution. 
\end{prop}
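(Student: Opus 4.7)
The plan is to first establish the lower bound $d_E\geq 4$ and then exhibit an explicit purification at $d_E=4$. Since $\dim\mathfrak{u}(2)=4$, a spanning-set purification must purify four linearly independent Hermitian operators. By Property 2 of Lemma~\ref{lemma1} the corresponding $H_j$ on $\mathscr{H}_{d_E}$ are also linearly independent; by the commutativity requirement they are simultaneously diagonalizable as in Eq.~(\ref{diago111}), and the space of real diagonal $d_E\times d_E$ matrices has dimension $d_E$, forcing $d_E\geq 4$.

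For the upper bound I would look for an orthonormal basis $\{|\psi_k\rangle\}_{k=1}^{4}$ of $\mathscr{H}_4$ such that the projected vectors $|\phi_k\rangle:=P|\psi_k\rangle$ yield four rank-1 operators $|\phi_k\rangle\langle\phi_k|$ that are linearly independent in $\mathfrak{u}(2)$. Given such a basis, for any prescribed basis $\{h_j\}_{j=1}^{4}$ of $\mathfrak{u}(2)$ there are unique real coefficients $\alpha_{kj}$ with $h_j=\sum_k\alpha_{kj}|\phi_k\rangle\langle\phi_k|$, and the operators $H_j:=\sum_k\alpha_{kj}|\psi_k\rangle\langle\psi_k|$ are then simultaneously diagonal in the basis $\{|\psi_k\rangle\}$, hence mutually commuting, and satisfy $PH_jP=h_j$ by construction.

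Finding such a basis reduces to a single linear-algebra condition. Writing $|\psi_k\rangle=|\phi_k\rangle\oplus|\chi_k\rangle$ with respect to $\mathscr{H}_4=\mathscr{H}_2\oplus\mathscr{H}_2$, orthonormality requires the Gram matrix $\langle\chi_k|\chi_l\rangle=\delta_{kl}-\langle\phi_k|\phi_l\rangle$ to be positive semidefinite of rank at most $2$ so as to be realizable in the 2-dimensional complement. Since the Gram matrix $G_{kl}=\langle\phi_k|\phi_l\rangle$ itself already has rank at most $2$, with non-zero eigenvalues equal to those of $\sum_k|\phi_k\rangle\langle\phi_k|$, this condition is equivalent to $\sum_k|\phi_k\rangle\langle\phi_k|=I_2$, i.e., to the $|\phi_k\rangle$ forming a rank-1 resolution of identity.

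For the explicit realization I would use the tetrahedral configuration on the Bloch sphere: set $|\phi_k\rangle:=\frac{1}{\sqrt{2}}|v_k\rangle$, where $|v_k\rangle\langle v_k|=\frac{1}{2}(I_2+\vec n_k\cdot\vec\sigma)$ for $\vec n_1,\ldots,\vec n_4$ pointing to the vertices of a regular tetrahedron inscribed in the Bloch sphere. The identity $\sum_k\vec n_k=0$ immediately gives $\sum_k|\phi_k\rangle\langle\phi_k|=I_2$, while the affine independence of the four tetrahedral directions guarantees that $\{|\phi_k\rangle\langle\phi_k|\}_{k=1}^{4}$ spans $\mathfrak{u}(2)$. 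The main delicacy is precisely this double requirement---rank-1 resolution of identity together with linear independence in $\mathfrak{u}(2)$---which the SIC-POVM choice packages into a single geometric statement about the tetrahedron.
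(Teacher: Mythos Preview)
Your proof is correct, and it takes a genuinely different route from the paper's.

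The paper's proof is purely computational: after reducing via Property~3 of Lemma~\ref{lemma1} to the traceless part $\mathfrak{su}(2)$, it simply writes down three explicit $4\times4$ matrices $\Sigma_x,\Sigma_y,\Sigma_z$ (with entries such as $\tfrac{7+8i}{9}$) and states that one can verify by direct calculation that they commute and project to $X,Y,Z$. Optimality is then read off from Eq.~(\ref{trivia}).

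Your argument is conceptual rather than computational. You recast the existence of a purification at $d_E=4$ as the existence of an orthonormal basis $\{|\psi_k\rangle\}$ of $\mathscr{H}_4$ whose projections $|\phi_k\rangle=P|\psi_k\rangle$ satisfy (i) $\sum_k|\phi_k\rangle\langle\phi_k|=I_2$ and (ii) linear independence of the four rank-one operators $|\phi_k\rangle\langle\phi_k|$ in $\mathfrak{u}(2)$; then you realize both conditions at once via the tetrahedral SIC-POVM. This is not only more transparent, it is essentially a clean instance of the mechanism the paper deploys later in the Appendix for Theorem~\ref{theo_2}: there too one seeks $W=PU$ with orthonormal rows (your condition~(i), since $WW^\dagger=I_d$ is exactly $\sum_k|\phi_k\rangle\langle\phi_k|=I_d$) such that the column outer products $X^i\times X^{i\dagger}$ span $\mathfrak{u}(d)$ (your condition~(ii)). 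What the paper achieves in the Appendix by an ad hoc triangular orthogonalization, you obtain for $d=2$ from the symmetry of the tetrahedron. The price is a small amount of Bloch-sphere geometry; the payoff is that no matrix entries need to be checked by hand, and the construction makes the connection to informationally complete POVMs explicit.
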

\begin{proof} By Property 3 of Lemma \ref{lemma1} we can restrict the problem to the case of the traceless operators of $\mathscr{H}_2$, i.e., we can focus on the $\mathfrak{su}(2)$ subalgebra. A set of linearly independent elements for such a space is provided by the Pauli matrices   $\{X,Y,Z\}$.
A purifying set $\{\Sigma_x,\Sigma_y,\Sigma_z\}$ of $\{X,Y,Z\}$ on $\mathscr{H}_4$ can then be exhibited explicitly, considering the following $4\times 4$ matrices,
\begin{align}
& \Sigma_x =
\left(
\begin{array}{cc|cc}
 0   & 1   & 1+i & 0 \\
 1   & 0   & 1+i & 0 \\ \hline
 1-i & 1-i & 1   & 0 \\
 0   & 0   & 0   & -1
\end{array}
\right), \nonumber\\
& \Sigma_y =
\left(
\begin{array}{cc|cc}
 0             & -i           & i & \frac{2+4i}{3}\\
 i             & 0            & 1 & \frac{1-i}{3} \\ \hline
 -i            & 1            & 0 & -1            \\
 \frac{2-4i}{3}& \frac{1+i}{3}& -1& 0
\end{array}
\right),\nonumber\\
& \Sigma_z =
\left(
\begin{array}{cc|cc}
 1              & 0              &- \frac{4+4i}{9}& \frac{7+8i}{9} \\
 0              & -1             & \frac{5+5i}{9} & -\frac{16-i}{9}\\ \hline
 -\frac{4-4i}{9}& \frac{5-5i}{9} & 0              & -i             \\
 \frac{7-8i}{9} & -\frac{16+i}{9}& i              & 0
\end{array}
\right), \label{Sigma_l_Mat}
\end{align}
and taking $P= I_2 \otimes(I_2 +Z)/2$.
It can be seen by direct calculation that they indeed commute. The optimality of the solution follows from the inequality (\ref{trivia}). 
\end{proof}

\begin{cor}[\textbf{Optimal purification of $\bm{\mathfrak{u}(2^n)}$}]
\label{tensor_purif}
Consider $\mathfrak{u}(2^n)$, the Lie algebra of self-adjoint operators acting on $n$ qubits (i.e., $\mathscr{H}_d =\mathscr{H}_2^{\otimes n}$). Then, a spanning-set purification  for this algebra can be constructed with operators acting on $\mathscr{H}_{d_E} = \mathscr{H}_4^{\otimes n}$. This is the optimal solution.
\end{cor}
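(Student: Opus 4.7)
The plan is to bootstrap Proposition \ref{Sigma_l} to the $n$-qubit case by tensorization. A natural spanning set for $\mathfrak{u}(2^n)$ is the Pauli-string basis $\{\sigma_{i_1}\otimes\cdots\otimes\sigma_{i_n}\}$ with $i_k\in\{0,x,y,z\}$ and $\sigma_0 := I_2$. I extend the operators from Proposition \ref{Sigma_l} by declaring $\Sigma_0 := I_4$ on $\mathscr{H}_4$, so that the projection identity $P\Sigma_i P=\sigma_i$ (with $P=I_2\otimes(I_2+Z)/2$) now holds for all four labels. The candidate purification of the Pauli string $\sigma_{i_1}\otimes\cdots\otimes\sigma_{i_n}$ is
\begin{equation}
\mathbf{\Sigma}_{i_1,\ldots,i_n} := \Sigma_{i_1}\otimes\cdots\otimes\Sigma_{i_n},
\end{equation}
acting on $\mathscr{H}_4^{\otimes n}=\mathscr{H}_{d_E}$, with the tensorized projection $P^{\otimes n}$ onto the $2^n$-dimensional subspace $\mathscr{H}_2^{\otimes n}\subset\mathscr{H}_4^{\otimes n}$.

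Next I would verify the two purification requirements. Commutativity is immediate: by Proposition \ref{Sigma_l} one has $[\Sigma_i,\Sigma_j]=0$ for all $i,j\in\{0,x,y,z\}$, hence for two Pauli-string labels $(i_1,\ldots,i_n)$ and $(j_1,\ldots,j_n)$,
\begin{equation}
[\mathbf{\Sigma}_{i_1,\ldots,i_n},\mathbf{\Sigma}_{j_1,\ldots,j_n}]
=\sum_{k=1}^n\Sigma_{i_1}\Sigma_{j_1}\otimes\cdots\otimes[\Sigma_{i_k},\Sigma_{j_k}]\otimes\cdots=0,
\end{equation}
upon using a standard telescoping identity for tensor products of pairwise commuting operators. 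For the projection property, since $P\Sigma_i P=\sigma_i$ holds for every single factor, multiplicativity of tensor products gives
\begin{equation}
P^{\otimes n}\,\mathbf{\Sigma}_{i_1,\ldots,i_n}\,P^{\otimes n}
= (P\Sigma_{i_1}P)\otimes\cdots\otimes(P\Sigma_{i_n}P)
= \sigma_{i_1}\otimes\cdots\otimes\sigma_{i_n}.
\end{equation}
Extending by linearity via Property 1 of Lemma \ref{lemma1} yields a purification of every element of $\mathfrak{u}(2^n)$ inside the commutative algebra generated by the $\mathbf{\Sigma}_{i_1,\ldots,i_n}$, establishing that this is a spanning-set purification.

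Finally, for optimality, I would argue by dimension counting. Any purification of a spanning set of $\mathfrak{u}(d)$ consists of $m=d^2$ linearly independent pairwise commuting self-adjoint operators on $\mathscr{H}_{d_E}$ (linear independence carried over by Property 2 of Lemma \ref{lemma1}). By the simultaneous-diagonalization representation (\ref{diago111}), these correspond to $d^2$ linearly independent real diagonal $d_E\times d_E$ matrices, forcing $d_E\ge d^2$. Applied to $d=2^n$ this gives $d_E\ge 4^n$, matching the construction above. The only step with any subtlety is the bookkeeping in the tensorized commutativity and projection identities, but these are entirely mechanical; the core content is simply that Proposition \ref{Sigma_l} tensorizes cleanly because the projection $P$ and each $\Sigma_i$ factor through single qubit pairs, and the optimality bound (\ref{trivia}) is already tight at $n=1$.
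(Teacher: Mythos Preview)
Your proposal is correct and follows essentially the same approach as the paper: tensorize the single-qubit purification $\{\Sigma_x,\Sigma_y,\Sigma_z\}$ of Proposition~\ref{Sigma_l} (with $\Sigma_0=I_4$) to purify the Pauli-string basis of $\mathfrak{u}(2^n)$, use the product projection $P^{\otimes n}$, and invoke the dimension-counting lower bound $d_E\ge m=d^2$ for optimality. The only cosmetic difference is that you spell out the factorwise commutator and projection identities explicitly, whereas the paper simply remarks that tensor products of commuting elements commute and cites Eq.~(\ref{trivia}).
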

\begin{proof}
This result follows by observing that any element of $\mathfrak{u}(2^n)$ can be expressed as a linear combination of tensor products of $n$ (generalized) Pauli operators $S_\ell$, with the definitions $S_0 = I_2$, $S_1=X$, $S_2=Y$, $S_3=Z$: 
\begin{equation}
h_j = \sum_{\substack{\ell_1,\ldots, \ell_n \\ \in \{ 0,1,2,3\} }}  \beta^{(j)}_{\ell_1, \ldots, \ell_n}S_{\ell_1} \otimes \cdots \otimes S_{\ell_n}\quad
(j=1,\ldots,2^{2n}).
\end{equation} 
Consider then the set formed by the operators 
\begin{equation}
H_j = \sum_{\substack{\ell_1, \ldots, \ell_n \\ \in \{ 0,x, y,z\} }}  \beta^{(j)}_{\ell_1, \ldots, \ell_n} \Sigma_{\ell_1}  \otimes \cdots \otimes \Sigma_{\ell_n},
\end{equation} 
with $\Sigma_\ell$ defined in Eq.\ \eqref{Sigma_l_Mat}. 
The operators $H_j$ act on the Hilbert space $\mathscr{H}_{d_E} = \mathscr{H}_4^{\otimes n}= \mathscr{H}_2^{\otimes 2 n}$ and commute with each other (this is because they are tensor products of commuting elements). Finally, by projecting them with $P= [I_2 \otimes (I_2+Z)/2]^{\otimes n}$ they yield $h_j$. The solution is optimal due to Eq.\ (\ref{trivia}). 
\end{proof}

The above can be used to bound the minimal value of $d_E$ for the case of an arbitrary finite-dimensional system $\mathscr{H}_d$ by simply embedding it into a collection of qubit system. Specifically consider 
 $\mathcal{S}
=\{ h_1, \ldots, h_m\}$,  a collection of $m$ (not necessarily commuting) self-adjoint operators acting on the  Hilbert space $\mathscr{H}_d$ of finite dimension $d$. Then, setting $n_0 = \lceil \log_2 d \rceil$, a purifying set for $\mathcal{S}$ can be constructed on 
$\mathscr{H}_{d_E} = \mathscr{H}_4^{\otimes n_0}$. This implies that  $d_E$ can be chosen to be equal to $4^{n_0}= (2^{n_0})^2\simeq d^2$. As a matter of fact, 
this result can be strengthened by showing that indeed $d_E = d^2$ independently of the dimension $d$.

\begin{theo}[\textbf{Optimal purification of $\bm{\mathfrak{u}(d)}$}]
\label{theo_2}
A spanning-set purification  for $\mathfrak{u}(d)$ can be constructed on $\mathscr{H}_{d_E} = \mathscr{H}_{d^2}$.
This is the optimal solution.
\end{theo}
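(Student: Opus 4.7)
My plan divides into an optimality argument and an existence argument.

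\textbf{Optimality.} Any spanning set of $\mathfrak{u}(d)$ contains $m=d^2$ linearly independent self-adjoint operators, so by Lemma~\ref{lemma1}(2) the purifying set must also consist of $d^2$ linearly independent, pairwise commuting Hermitian operators on $\mathscr{H}_{d_E}$. By Eq.~\eqref{diago111} these lie in a common abelian subalgebra of $\mathfrak{u}(d_E)$, whose real dimension is bounded by $d_E$ (since all its elements are diagonal in one common basis). Hence $d_E\geq d^2$.

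\textbf{Existence.} By Lemma~\ref{lemma1}(1) it suffices to purify one fixed basis $\{h_1,\ldots,h_{d^2}\}$ of $\mathfrak{u}(d)$. Writing $H_j=UD_jU^\dagger$ with $U\in\mathcal{U}(d^2)$ and real diagonal $D_j$ (Eq.~\eqref{diago111}), I consider the real-linear map
\[
\Phi_U : D \longmapsto P\,U\,D\,U^\dagger\,P,
\]
from real diagonal $d^2\times d^2$ matrices to Hermitian $d\times d$ matrices. Both source and target have real dimension $d^2$, so bijectivity and injectivity coincide. If such a $U$ exists, the purification is produced immediately by setting $D_j:=\Phi_U^{-1}(h_j)$ and $H_j:=UD_jU^\dagger$.

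Denoting the first $d$ rows of $U$ by vectors $v_1,\ldots,v_d\in\mathbb{C}^{d^2}$ (so $PU$ is an isometric embedding), a direct computation gives $[\Phi_U(D)]_{ab}=\sum_k (v_a)_k\,\overline{(v_b)_k}\,D_k$, so injectivity of $\Phi_U$ is equivalent to the $d^2$ Hadamard products $\{v_a\odot\overline{v_b}\}_{a,b=1}^d$ spanning $\mathbb{C}^{d^2}$ in the appropriate Hermitian-symmetric real sense. This in turn is the non-vanishing of a polynomial $\det\Phi_U$ in the real and imaginary entries of $U$: either it holds on a dense open subset of $\mathcal{U}(d^2)$ or it holds nowhere.

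The main obstacle is therefore exhibiting at least one witness $U$ for each $d$. For $d=2^n$, Corollary~\ref{tensor_purif} already furnishes such a witness via tensor products of the $\Sigma_\ell$ matrices of Proposition~\ref{Sigma_l}, so the theorem is immediate in that case. For general $d$, I would construct $v_1,\ldots,v_d$ with sufficiently generic amplitudes and phases --- concretely, a Vandermonde-type choice $(v_a)_k=c_{ak}\,\zeta^{f_a(k)}$ with real amplitudes $c_{ak}$ and integer phase functions $f_a(k)$ chosen so that both the diagonal vectors $|v_a|^2$ (giving access to all $d$ independent diagonal Hermitian matrices) and the off-diagonal Hadamard products $v_a\odot\overline{v_b}$ for $a\neq b$ (giving access to all $d(d-1)$ independent real parameters in the off-diagonal block) are linearly independent. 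This reduces to a Vandermonde rank computation, most delicate because a pure Fourier choice $(v_a)_k\propto\zeta^{(a-1)k}$ fails on the diagonal part: $|v_a|^2$ is then independent of $a$. Allowing non-uniform amplitudes or using quadratic phase exponents $f_a(k)\propto ak^2$ repairs this, and once one such witness is in hand the dense-open argument completes the proof.
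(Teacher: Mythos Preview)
Your overall strategy coincides with the paper's: reduce to the real-linear map $\Phi_U=f_{PU}:\Diag(d^2)\to\mathfrak{u}(d)$, observe that source and target have equal real dimension, and argue that it suffices to exhibit a single unitary $U$ for which $\Phi_U$ is bijective (the polynomial/genericity observation you make is exactly the content of the paper's Theorem~\ref{Almost_all}). Your optimality argument is also the paper's.

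The gap is in the existence step for general $d$. You correctly note that Corollary~\ref{tensor_purif} supplies a witness when $d=2^n$, but for arbitrary $d$ you only sketch a Vandermonde-type construction and assert that ``non-uniform amplitudes or quadratic phase exponents'' repair the diagonal degeneracy. Two concrete problems remain. First, the assertion is not proved: you have not shown that any specific choice of amplitudes $c_{ak}$ and phase functions $f_a$ actually makes the $d^2$ Hadamard products $\{v_a\odot\overline{v_b}\}$ linearly independent, and this is precisely the heart of the matter. Second, and more seriously, you need the $v_a$ to be \emph{orthonormal} (so that $PU$ extends to a unitary $U$), and your Vandermonde sketch does not address how to enforce orthonormality while simultaneously preserving the spanning property. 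These two constraints interact: orthonormalizing a set of vectors that works will in general destroy the Hadamard-product independence, so one cannot treat them separately.

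The paper handles exactly this difficulty. It first writes down an explicit non-orthonormal $\widetilde{W}$ whose columns $X^i$ are chosen so that the outer products $X^i\times X^{i\dag}$ run over the standard basis~(\ref{decomposition}) of $\mathfrak{u}(d)$. It then orthogonalizes the rows of $\widetilde{W}$ by modifying \emph{only} the leftmost $d\times d$ block (via a specific lower-triangular Gram--Schmidt with explicitly computed entries $a_n$), and finally proves by an induction on $k$-lower-right submatrices that surjectivity survives the orthogonalization, using a sign argument on the real and imaginary parts of the $a_n$. This explicit construction and the accompanying induction are the missing pieces in your proposal.
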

\begin{proof} 
The proof is given in the Appendix, where a purifying set is explicitly constructed.
\end{proof}

The construction presented in the proof of Theorem \ref{theo_2} in the Appendix provides a matrix $U$ that allows to perform the purification of all the Hermitian matrices in $\mathfrak{u}(d)$. But actually we notice that almost any unitary matrix will do the job equally well, as we show now. So there is almost free choice in determining a matrix $U$ that accomplishes the task, which can even be chosen at random in the parameter space.
\begin{theo}
\label{Almost_all}
Almost all unitary matrices $U \in \mathcal{U}(d^2)$ [with respect to (every absolutely continuous measure with respect to)  Haar measure] are such that the map $f_{PU} $ defined in the proof of Theorem \ref{theo_2} is surjective. This implies that almost all unitary matrices $U \in \mathcal{U}(d^2)$ provide a purification for all sets of Hermitian operators.
\end{theo}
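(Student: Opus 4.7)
The plan is to exploit the linearity of the map $f_{PU}: \Diag(d^2) \to \mathfrak{u}(d)$, $D \mapsto PUDU^\dagger P$, constructed in the proof of Theorem \ref{theo_2}. Since the real vector space $\Diag(d^2)$ of diagonal real matrices of size $d^2$ and the real vector space $\mathfrak{u}(d)$ of self-adjoint $d\times d$ matrices both have the same real dimension $d^2$, surjectivity of $f_{PU}$ is equivalent to injectivity and, once fixed orthonormal bases for both spaces, to the non-vanishing of a single determinant $\Delta(U) := \det f_{PU}$.

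Next, I would observe that the matrix elements of $f_{PU}(D)$ are linear in the real entries of $D$ and bilinear in the (real and imaginary parts of the) entries of $U$ and $U^\dagger$; hence each entry of the matrix representation of $f_{PU}$ is a real polynomial in the coordinates of $U$. Consequently $\Delta(U)$ is itself a real polynomial (of degree at most $2d^2$) in the real and imaginary parts of the entries of $U$, which restricts to a real-analytic function on the smooth real-analytic manifold $\mathcal{U}(d^2)$.

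Then I would invoke Theorem \ref{theo_2} to produce at least one matrix $U_0 \in \mathcal{U}(d^2)$ for which $f_{PU_0}$ is surjective, so that $\Delta(U_0) \neq 0$. Therefore $\Delta$ does not vanish identically on $\mathcal{U}(d^2)$. Combined with the fact that $\mathcal{U}(d^2)$ is a connected real-analytic manifold, this forces the zero locus
\begin{equation}
\mathcal{Z} := \{U \in \mathcal{U}(d^2) : \Delta(U) = 0\}
\end{equation}
to be a closed proper real-analytic subvariety of $\mathcal{U}(d^2)$.

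Finally, I would conclude by a standard measure-theoretic argument: a proper real-analytic subvariety of a connected real-analytic manifold has Haar measure zero (apply the identity theorem for real-analytic functions in local charts and cover by countably many charts), and therefore measure zero with respect to every measure absolutely continuous with respect to Haar. For every $U \in \mathcal{U}(d^2) \setminus \mathcal{Z}$, the map $f_{PU}$ is a bijection, so any collection $\{h_1,\ldots,h_m\} \subset \mathfrak{u}(d)$ is purified by $H_j := U D_j U^\dagger$ with $D_j := f_{PU}^{-1}(h_j)$, yielding the claim. The only mildly delicate step is the passage from ``$\Delta$ not identically zero'' to ``$\mathcal{Z}$ has measure zero,'' but this is a routine consequence of the analyticity of $\Delta$ and the connectedness of $\mathcal{U}(d^2)$; the existence step, supplied by Theorem \ref{theo_2}, is what makes the polynomial non-trivial in the first place.
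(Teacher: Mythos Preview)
Your proposal is correct and follows essentially the same approach as the paper: both reduce surjectivity to the non-vanishing of $\Delta(U)=\det f_{PU}$, use Theorem \ref{theo_2} to show $\Delta\not\equiv 0$, and conclude that the zero locus of a non-trivial real-analytic function on the connected manifold $\mathcal{U}(d^2)$ has Haar measure zero. The only cosmetic difference is that the paper spells out an inductive proof of the measure-zero fact and passes through the exponential parametrization $U=e^{iH}$, whereas you invoke the identity theorem directly on the manifold and cite the result as standard.
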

\begin{proof}
The linear application $f_{PU}$ defined in Eq.\ (\ref{defFU}) maps $\Diag(d^2)$ into $\mathfrak{u}(d)$, which are both $d^2$-dimensional real vector spaces, and so it is surjective if and only if its determinant is different from zero.
 Calling $x_{\ell,k}$ the entries of the matrix $U$, we see that $f_{PU} $ depends quadratically on the complex variables $x_{\ell,k}$, and its determinant $\det f_{PU}$ is a polynomial in these variables.

Preliminarily, if we take $U$ to be an arbitrary complex matrix, i.e., not necessarily unitary, the Theorem can be straightforwardly proved. In fact the set of $U$'s which make $f_{PU} $ non-surjective are the zeros of the polynomial $p(u_1,u_2, \ldots)  :=  \det f_{PU}$, where $u_1,u_2, \ldots$ are real parameters 
 which encode the matrix $U$. Such a polynomial is clearly non-vanishing, as we have found in Theorem \ref{theo_2} an instance of  $U$ for which $f_{PU}$ is surjective. The zero set of a non-null analytic function is a closed set (as it is preimage of a closed set), nowhere dense (otherwise the analytic function would be zero on all its connected domain of convergence), and has zero Lebesgue measure. We prove this by induction. The proposition is true for non-null analytic functions of one real variable, as the zero set is discrete. In general, suppose that $g(x_1,x_2,\ldots,x_K)$ is a non-null analytic function of real variables in $\mathds{R}^K$. Then fixing $x_1$, the function $g_{x_1}(x_2,\ldots, x_K):=g(x_1,x_2,\ldots,x_K)$ is an analytic function of $K-1$ variables.
 Calling $S$ and $S(x_1)$ the zero sets of $g(x_1,x_2,\ldots,x_K)$ and $g_{x_1}(x_2, \ldots, x_K)$, respectively, by induction hypothesis $S(x_1)$ must have $(K-1)$-dimensional Lebesgue measure zero, for all except countably many values $x_1 \in \mathds{R}$. Then we integrate the characteristic function
\begin{align}
&\int \mathbf{1}_{S}(x_1, x_2,\ldots,x_K)\, dx_1 dx_2\cdots dx_K  \nonumber \\
	&\qquad
	 = \int \left( \int \mathbf{1}_{S(x_1)} (x_2,\ldots,x_K)\, dx_2 \cdots dx_K \right) dx_1 \nonumber \\
	&\qquad = \int 0\, dx_1\quad\text{(almost everywhere)} \nonumber \displaybreak[0]\\
	&\qquad = 0\vphantom{\int}
\end{align}
to achieve the stated result.

The same argument applies also when we restrict $U$ to be unitary. In fact, any unitary matrix can be obtained as an exponential of a Hermitian matrix. So the same reasoning as above applies to the analytic function $g(h_1,\ldots,h_K) = \det f(e^{iH})$ where $h_1,\ldots, h_K$ are real parameters which encode the Hermitian matrix $H$ [formally, the proof proceeds by considering a set of local charts that cover the manifold $\mathcal{U}(d^2)$]. Moreover, it can be shown that the Haar measure on $\mathcal{U}(d^2)$ is obtained from the Lebesgue measure on $\mathfrak{u}(d^2)$ via multiplication by a Jacobian of an analytic function, which is always regular, and the property of having zero measure is preserved under this operation.
\end{proof}

\section{Generator purification of $\bm{\mathfrak{u}(d)}$ into $\bm{\mathscr{H}_{d+1}}$}
\label{sec:Daniel}
The Propositions in Sec.\ \ref{sec2} concern the purification of two Hamiltonians ($m=2$).
In particular, it was proved in Proposition \ref{purif_2ops_qubit} that two non-commuting Hamiltonians acting on the Hilbert space $\mathscr{H}_2$ of a qubit can be purified into two commuting Hamiltonians in an extended Hilbert space $\mathscr{H}_3$, namely, by extending the Hilbert space \textit{by only one dimension}.
It is in general not the case for a larger system: adding one dimension is typically not enough to purify a couple of Hamiltonians for a system of dimension $d\ge3$, as proved in Proposition \ref{lower_bound}.
See also Eq.\ (\ref{ris1}).

On the other hand, Proposition \ref{purif_2ops_qubit} on the optimal purification for $m=2$ and $d=2$ helps us to prove 
that one can always find a purification of a generating set of $\mathfrak{u}(d)$ which only involves a $d_E=d+1$ dimensional space. Expressed in the language introduced in Definition \ref{def2} this implies that the largest commutative subalgebra of $\mathfrak{u}(d+1)$ provides a generator purification of $\mathfrak{u}(d)$.
More precisely: 
\begin{theo}
\label{Daniel}
A pair of randomly chosen commuting Hamiltonians $H_1$ and $H_2$ on $\mathscr{H}_{d+1}$ almost surely provide a pair of Hamiltonians $h_1$ and $h_2$ which generate the full Lie algebra on $\mathscr{H}_d$, i.e., $\mathfrak{Lie}(h_1,h_2)=\mathfrak{u}(d)$.
In other words, almost all pairs of commuting Hamiltonians in $\mathscr{H}_{d+1}$ are capable of quantum computation in $\mathscr{H}_d$.
\end{theo}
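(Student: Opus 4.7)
The plan is to combine an explicit example with the analytic-genericity argument of Theorem \ref{Almost_all}. Parametrize commuting pairs on $\mathscr{H}_{d+1}$ as $(H_1,H_2) = (UD_1U^\dagger,UD_2U^\dagger)$ with $U \in \mathcal{U}(d+1)$ and $D_1,D_2 \in \Diag(d+1)$; the projected operators $h_j = PH_jP$ and their iterated brackets depend polynomially on these parameters. The condition $\mathfrak{Lie}(h_1,h_2) = \mathfrak{u}(d)$ is equivalent to the linear independence of some $d^2$ chosen nested brackets (e.g.\ $h_1,h_2,i[h_1,h_2],i[h_1,i[h_1,h_2]],\ldots$), i.e.\ to the non-vanishing of an analytic function $F(U,D_1,D_2)$ of the real parameters. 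By the argument used in Theorem \ref{Almost_all}, the zero set of any not-identically-zero analytic function has Haar measure zero, so it suffices to produce a single configuration at which $F \ne 0$.

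For the explicit configuration I would take $D_1 = \mathrm{diag}(\mu_1,\ldots,\mu_{d+1})$ with the $\mu_i$ chosen so that all pairwise differences $\mu_i - \mu_j$ ($i \ne j$) are distinct and non-zero, take $D_2 = |d+1\rangle\langle d+1|$ (rank-one projector onto the extra dimension), and let $U = e^{i\epsilon A}$ with small $\epsilon > 0$ and $A$ Hermitian having $A_{i,d+1} \ne 0$ for every $i \in \{1,\ldots,d\}$. A first-order expansion in $\epsilon$ then gives $h_1 = \mathrm{diag}(\mu_1,\ldots,\mu_d) + O(\epsilon)$, which for small $\epsilon$ still has simple spectrum with all distinct eigenvalue differences; meanwhile $h_2 = |\tilde\psi\rangle\langle\tilde\psi|$ with $|\tilde\psi\rangle = PU|d+1\rangle$ has components $\langle i|\tilde\psi\rangle = i\epsilon A_{i,d+1} + O(\epsilon^2)$, all non-zero. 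By continuity the overlaps of $|\tilde\psi\rangle$ with the perturbed eigenvectors $|\phi_i\rangle$ of $h_1$ remain non-zero, so $(h_2)_{ij} = \langle\phi_i|\tilde\psi\rangle\overline{\langle\phi_j|\tilde\psi\rangle} \ne 0$ for every pair $i,j$ in the $h_1$-eigenbasis.

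At this configuration I would invoke the standard Lie-algebraic controllability criterion: if $h_1$ has simple spectrum with all distinct eigenvalue differences and $h_2$ has all non-zero off-diagonal entries in the $h_1$-eigenbasis, then $\mathfrak{Lie}(h_1,h_2) = \mathfrak{u}(d)$. A Vandermonde argument on the nested brackets $(i\,\mathrm{ad}_{h_1})^k h_2$ spans all Hermitian off-diagonal matrices; mutual commutators of those generators then produce all traceless diagonal matrices; and $h_1$ itself supplies the trace direction. This forces $F \ne 0$ at the chosen point, and the argument of Theorem \ref{Almost_all} (extended to the manifold of commuting pairs via local charts, exactly as done there for $\mathcal{U}(d^2)$) concludes the proof.

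The main obstacle is the simultaneous verification of the three generic conditions at one explicit parameter point: simple spectrum of $h_1$, all distinct eigenvalue differences of $h_1$, and non-vanishing overlaps of $|\tilde\psi\rangle$ with each $|\phi_i\rangle$. The rank-one choice of $D_2$ together with the small-$\epsilon$ perturbative analysis around $U = I$ reduces all three to a direct first-order calculation, with the diagonal block $\mathrm{diag}(\mu_1,\ldots,\mu_d)$ of $D_1$ set up from the outset to satisfy the spectral conditions, and with the perturbation $A$ chosen specifically to make $|\tilde\psi\rangle$ populate every basis direction.
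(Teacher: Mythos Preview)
Your approach is correct and shares the paper's overall architecture---exhibit one explicit instance where the projected pair generates $\mathfrak{u}(d)$, then invoke analyticity to conclude genericity---but the explicit instance you build is quite different from the paper's. The paper takes the concrete generating pair $h_1=\mathrm{diag}(1,0,\ldots,0)$ and $h_2$ the tridiagonal nearest-neighbour matrix (whose generation of $\mathfrak{u}(d)$ is imported from Ref.~\cite{ref:QSI}), and then \emph{purifies} this pair into $\mathscr{H}_{d+1}$ by applying the qubit purification of Proposition~\ref{purif_2ops_qubit} to the top $2\times2$ block, producing closed-form commuting $H_1,H_2$ that one can check by hand. Your route instead \emph{starts} on the commuting side, with $D_1$ a generic diagonal, $D_2=|d{+}1\rangle\langle d{+}1|$, and $U=e^{i\epsilon A}$ a small rotation, and then argues perturbatively that $h_1$ inherits a strongly regular spectrum from $D_1$ while $h_2=|\tilde\psi\rangle\langle\tilde\psi|$ is fully supported in the $h_1$-eigenbasis; the Vandermonde/connectedness criterion then gives $\mathfrak{u}(d)$. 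The paper's construction is cleaner (no $\epsilon$, explicit integers, and it showcases Proposition~\ref{purif_2ops_qubit} as the key tool), while yours is more self-contained on the controllability side (you sketch the Lie-algebra argument rather than cite it) and emphasizes that no special structure is needed---any small generic tilt of a rank-one projector against a regular diagonal already works. One small point: your choice of $d^2$ nested brackets should be made \emph{after} fixing the successful parameter point (so that $F$ is non-zero there by construction), rather than in advance; your narrative reads as if the brackets are fixed first, which would leave a gap.
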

\begin{proof}
To prove this statement, we have only to find an example of such a set $\{H_1,H_2,P\}$ on $\mathscr{H}_{d+1}$ that yields $\{h_1,h_2\}$ generating the full Lie algebra on $\mathscr{H}_d$ (see Ref.\ \cite{Bur14}).
There is a particularly simple pair of generators $\{h_1,h_2\}$ of $\mathfrak{u}(d)$, namely,
\begin{equation}
h_1=\left(
\begin{array}{ccccc}
1&&&&\\
&0&&&\\
&&\ddots&&\\
&&&\ddots&\\
&&&&0
\end{array}
\right),\quad
h_2=\left(
\begin{array}{ccccc}
0&1&&&\\
1&0&1&&\\
&1&0&\ddots&\\
&&\ddots&\ddots&1\\
&&&1&0
\end{array}
\right).
\end{equation}
A proof that these generate $\mathfrak{u}(d)$ is given in Ref.\ \cite{ref:QSI}.
We can purify them in $\mathscr{H}_{d+1}$, by exploiting the formulas presented in Proposition \ref{purif_2ops_qubit} for the purification of a couple of Hamiltonians of a qubit.
Indeed, two $2\times2$ matrices
\begin{equation}
\left(\begin{array}{cc}1&0\\0&0\end{array}\right),\qquad
\left(\begin{array}{cc}0&1\\1&0\end{array}\right)
\end{equation}
are essentially Pauli matrices $Z$ and $X$, and can be purified to
\begin{equation}
\left(\begin{array}{c|cc}1/2&-1/\sqrt{2}&0\\\hline-1/\sqrt{2}&1&0\\0&0&0\end{array}\right),\qquad
\left(\begin{array}{c|cc}0&0&\sqrt{2}\\\hline0&0&1\\\sqrt{2}&1&0\end{array}\right),
\end{equation}
where we have used Properties 1 and 3 of Lemma \ref{lemma1} (multiplication by a constant and shift by the identity matrix) to convert the first matrix into $-(1/2)Z$ and applied the purification formulas in Eq.\ (\ref{eqn:PurificationPauli}), extending the matrices to the top-left by one dimension, instead of to the right-bottom.
This suggests the purification of the above $h_1$ and $h_2$ to
\begin{align}
H_1&=\left(
\begin{array}{c|ccccc}
1/2&-1/\sqrt{2}&0\\\hline
-1/\sqrt{2}&1&0&&&\\
0&0&0&&&\\
&&&\ddots&&\\
&&&&\ddots&\\
&&&&&0
\end{array}
\right),\\
H_2&=\left(
\begin{array}{c|ccccc}
0&0&\sqrt{2}\\\hline
0&0&1&&&\\
\sqrt{2}&1&0&1&&\\
&&1&0&\ddots&\\
&&&\ddots&\ddots&1\\
&&&&1&0
\end{array}
\right).
\end{align}
These matrices actually commute $[H_1,H_2]=0$ and reproduce $h_1$ and $h_2$ once projected by the projection
\begin{equation}
P=\left(
\begin{array}{c|ccc}
0&0&\cdots &0\\\hline
0 &  \\
\vdots &&I_d& \\
0& 
\end{array}
\right).
\end{equation}
The existence of an example makes us sure that all the sets $\{H_1,H_2,P\}$ on $\mathscr{H}_{d+1}$ except for discrete sets of measure zero do the same job, yielding $\{h_1,h_2\}$ generating the full $\mathfrak{u}(d)$ \cite{Bur14}.
\end{proof}

In Ref.\ \cite{Bur14}, it is shown that almost all pairs of commuting Hamiltonians $\{H_1,H_2\}$ of $n$ qubits are turned into $\{h_1,h_2\}$ capable of quantum computation on $n-1$ qubits, by projecting only a single qubit (i.e., $d_E=2^n$ and $d=2^{n-1}=d_E/2$).
The above Theorem \ref{Daniel} shows that the reduction by only one dimension can already make a big difference.

\section{Conclusions}
\label{sec.con}
In this work we have introduced the notion of Hamiltonian purification and the associated notion of algebra purification. As discussed in the Introduction these mathematical properties arise in the context of quantum control induced via a quantum Zeno effect \cite{Bur14}. We focus specifically on the problem of identifying the minimal dimension $d_E^{\text{(min)}}(d,m)$ which is needed in order to purify a generic set of $m$ linearly independent Hamiltonians, providing bounds and exact analytical results in many cases of interest. In particular the value of $d_E^{\text{(min)}}(d,m=d^2)$ has been exactly computed: this corresponds to the case where one wishes  to induce a spanning-set  purification of the whole algebra of operators acting on the input Hilbert space. 
For smaller values of $m$, apart from some special cases discussed in Sec.\ \ref{sec2}, the quantity $d_E^{\text{(min)}}(d,m)$ is still unknown, e.g., see Fig.\ \ref{fig:Graphics_4}, which refers to the case $d=3$. Finally 
for generator purification of $\mathfrak{u}(d)$  we showed that a $(d+1)$-dimensional Hilbert space can be sufficient. This allowed us to strengthen the argument in Ref.\ \cite{Bur14}: a rank-$d$ projection suffices to turn commuting Hamiltonians on the $(d+1)$-dimensional Hilbert space into a universal set in the $d$-dimensional Hilbert space.
\begin{figure}[t]
    \centering
    \includegraphics[scale=.23]{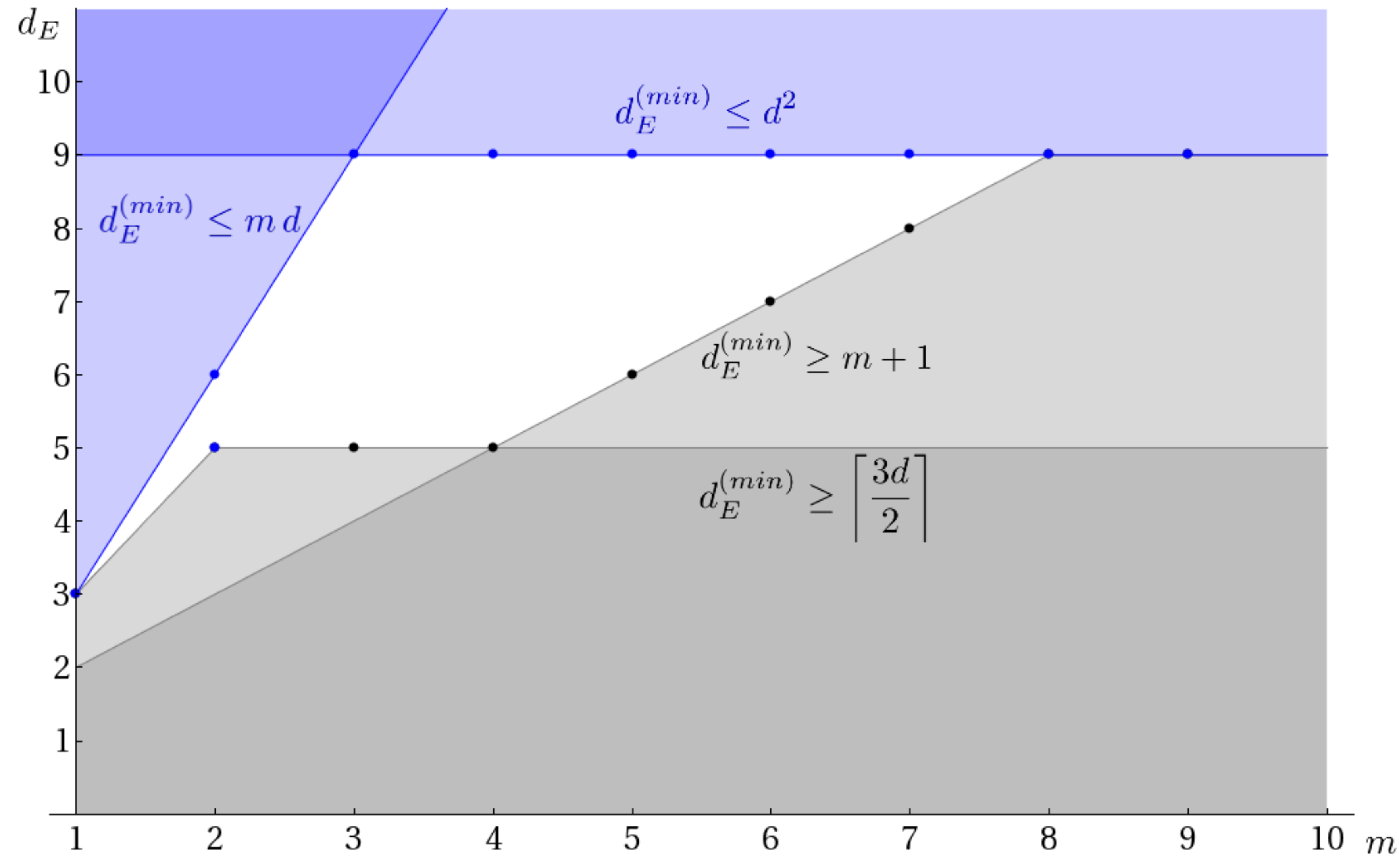}
    \caption{Plots of the admissible regions for  $d_E^{\text{(min)}}(d,m)$ 
for the qutrit case ($d=3$) as functions of $m$. The blue points give the dimensions for which an explicit construction is known.
The gray lines give the known lower bounds on $d_E^{\text{(min)}}(3,m)$.
The black points give the values of $d_E^{\text{(min)}}(3,m)$ estimated by numeric inspection. }
    \label{fig:Graphics_4}
\end{figure}

\acknowledgments
This work was partially supported by the Italian National Group of Mathematical Physics (GNFM-INdAM), by PRIN 2010LLKJBX on ``Collective quantum phenomena: from strongly correlated systems to quantum simulators,'' and by Grants-in-Aid for
Scientific Research (C) (No.\ 22540292 and No.\ 26400406) from JSPS, Japan.

\appendix*
\section{Proof of Theorem \ref{theo_2}}\label{app:ProofTh3}
Here we prove Theorem \ref{theo_2} in Sec.\ \ref{sec5}.
The optimality of the construction follows from the inequality (\ref{trivia}).
From Eq.\ (\ref{diago111}), we can prove that such a solution exists by showing that there are a unitary $U \in \mathcal{U}(d^2)$ and a rank-$d$ projection $P$ defined on $\mathscr{H}_{d^2}$ such that the linear map $f_{PU}:  \Diag(d^2) \to \mathfrak{u}(d)$,
\begin{equation}
   f_{PU} (D ) {\oplus 0_{d^2-d}}   =  P UDU^\dagger {P} , \label{defFU}
\end{equation}
is surjective. Without loss of generality we  are considering $\mathscr{H}_{d^2}= \mathscr{H}_{d} \oplus \mathscr{H}_{d^2-d}$, so that $P=I_d \oplus 0_{d^2-d}$, and~(\ref{defFU}) reads
\begin{equation}
   f_{PU} (D )  =  W D W^\dagger, \label{defFU1}
\end{equation}
where we can parametrize the matrix 
${W}: \mathscr{H}_{d^2} \rightarrow \mathscr{H}_d$ as
\begin{align}
{W} 
&  = 
\left (\begin{array}{ccc|cccc}
	x_{1,1}& \cdots & x_{1,d} & x_{1,d+1} & \cdots & \cdots & x_{1,d^2}\\
	\vdots & \ddots & \vdots    & \vdots & \cdots & \cdots & \vdots \\
	x_{d,1} & \cdots & x_{d,d} & x_{d,d+1} & \cdots & \cdots & x_{d,d^2}
\end{array}\right ) \nonumber \\
& \equiv 
\left( \begin{array}{c}
	X_1 \\
	X_2 \\
	\vdots \\
	X_d
\end{array} \right)
  \equiv 
\left( \begin{array}{ccc|cccc}
	X^1 & \cdots & X^d & X^{d+1} & \cdots & \cdots & X^{d^2} 
\end{array} \right).
\end{align}
{Here}   $x_{\ell,j}$ is the matrix element associated with  the $\ell$th row and the $j$th column of the unitary $U$, and where 
for $\ell\in\{ 1, \ldots, d\}$ we define $X_\ell$ as the complex row vector of $\mathds{C}^{d^2}$ whose $j$th component is 
$x_{\ell,j}$, while  
 for $j\in\{ 1, \ldots, d^2\}$ we define $X^j $ as the complex column vector of  $\mathds{C}^{d}$ whose $\ell$th component 
 is $x_{\ell,j}$.
 The unitarity condition for $U$ requires the row vectors $X_1,\ldots,X_d$ to be orthonormal, i.e.,
 \begin{equation}
 X_\ell \cdot X_{\ell'}^\dag = \sum_{j=1}^{d^2} x_{\ell,j}x^*_{\ell',j} = \delta_{\ell, \ell'} \label{orthoN}.
 \end{equation} 
 The surjectivity condition for $f_{PU} $  instead can be analyzed in terms of  the column vectors $X^j$.
Consider in fact the basis for $\Diag(d^2)$ consisting of matrices $\hat{u}_{ii}$ with $i \in \{1,\ldots,d^2\}$ with only one non-zero entry, 1 in the $i$th position on the diagonal. The function $f_{PU} $ is then surjective if 
the   matrices $f_{PU} (\hat{u}_{11}), \ldots, f_{PU} (\hat{u}_{d^2d^2})$ are linearly independent, i.e., if they span
the whole algebra $\mathfrak{u}(d)$. These are explicitly given by 
\begin{align}
f_{PU} (\hat{u}_{ii}) &  =  {W} \hat{u}_{ii}  {W^\dag}  \nonumber\\
&  = 
\left (\begin{array}{cccc}
	|x_{1,i}|^2     &x_{1,i}x_{2,i}^*& \cdots & x_{1,i}x_{d,i}^* \\
	x_{2,i}x_{1,i}^*& |x_{2,i}|^2    & \cdots & x_{2,i}x_{d,i}^* \\
	\vdots          & \vdots         & \ddots & \vdots           \\
	x_{d,i}x_{1,i}^*&x_{d,i}x_{2,i}^*& \cdots & |x_{d,i}|^2
\end{array}\right )\nonumber\\
& = X^i \times X^{i\dag}, \label{identity1}
\end{align}
where the last identity stresses the fact that, by construction, $f_{PU} (\hat{u}_{ii})$ can be seen as the outer product ``$\times$'' of the vector $X^i$ with itself \cite{NOTA3}.

In order to identify a solution for the problem we have hence to find an assignment for the coefficients $x_{\ell,j}$ which 
 fulfill the condition (\ref{orthoN}) while ensuring that the matrices (\ref{identity1})  span the whole $\mathfrak{u}(d)$.
To show this we proceed by steps. First we identify values for $x_{\ell,i}$ in such a way that the associated $d\times d^2$ matrix  ${\widetilde{W}}$ guarantees that  $\{ {\widetilde{W}} \hat{u}_{ii} {\widetilde{W}^\dagger}\}_{i \in\{1, \ldots, d^2\}}$ provides a basis for $\mathfrak{u}(d)$, hence that the associated mapping {$f_{\widetilde{W}}$} 
is surjective. Then we modify {$\widetilde{W}$} in such a way that the condition (\ref{orthoN}) is fulfilled by orthonormalizing its rows, while making sure that the surjectivity condition of the associated mapping is preserved.

Calling $e_i$ the row vector of $\mathds{C}^d$ with 1 in the $i$th position and introducing $e_{n,m}^{(+)}  :=  e_n + e_m$ and $e_{n,m}^{(-)}  :=  e_n - i e_m$, a basis for $\mathfrak{u}(d)$ is given by the following matrices \cite{NOTA4}
\begin{equation}
\label{decomposition}
\left\lbrace\begin{array}{r@{}l@{\quad}r}
		e_n^\dagger &{}\times  e_n,  &  n \in \{1,\ldots,d\}  , \\
		e_{n,m}^{(+)\dagger} &{}\times e_{n,m}^{(+)},
		&  n <m   \in \{1,\ldots,d\}  , \\
		e_{n,m}^{(-)\dagger} &{}\times e_{n,m}^{(-)},
		& n <m  \in \{1,\ldots,d\} .
\end{array}\right.
\end{equation}
From Eq.\ (\ref{identity1}) it follows that this set  can be obtained as {$f_{\widetilde{W}}(\hat{u}_{ii})= \widetilde{W}\hat{u}_{ii} \widetilde{W}^\dagger$}
if we take as matrix {$\widetilde{W}$} the one with column vectors 
\begin{align}
&
\left\lbrace\begin{array}{l@{}c@{}l}
	X^1  &{}={}&  e_1^\dagger, \\
	 &\vdots& \\
	X^d  &{}={}&  e_d^\dagger ,
\end{array}\right.
\displaybreak[0]\\
&
\left\lbrace\begin{array}{l@{}c@{}l}
	X^{d+1} &{} ={}&  e_{1,2}^{(+)\dagger}, \\
	& \vdots& \\
	X^{2d} & {}={}& e_{1,d}^{(+)\dagger}, \\
	X^{2d+1} &{} = {}&e_{2,3}^{(+)\dagger}, \\
	& \vdots& \\
	X^{\frac{d(d+1)}{2}} &{} ={}& e_{d-1,d}^{(+)\dagger},
\end{array}\right.
\quad
\left\lbrace\begin{array}{l@{}c@{}l}
	X^{\frac{d(d+1)}{2}+1} &{} ={}& e_{1,2}^{(-)\dagger} ,\\
	& \vdots &\\
	X^{\frac{d(d+3)}{2}} & {}={}&  e_{1,d}^{(-)\dagger} ,\\
	X^{\frac{d(d+3)}{2}+1} & {}= {}& e_{2,3}^{(-)\dagger}, \\
	& \vdots &\\
	X^{d^2} & {}= {}&e_{d-1,d}^{(-)\dagger}.
\end{array}\right.
\end{align}
 For instance, in the case $d=4$, this choice gives 
\begin{equation}
\label{PU}
{\widetilde{W}} =
\left(\begin{array}{cccc|cccccccccccc}
 1 & 0 & 0 & 0 & 1 & 1 & 1 & 0 & 0 & 0 & 1 & 1 & 1 & 0 & 0 & 0 \\
 0 & 1 & 0 & 0 & 1 & 0 & 0 & 1 & 1 & 0 & i & 0 & 0 & 1 & 1 & 0 \\
 0 & 0 & 1 & 0 & 0 & 1 & 0 & 1 & 0 & 1 & 0 & i & 0 & i & 0 & 1 \\
 0 & 0 & 0 & 1 & 0 & 0 & 1 & 0 & 1 & 1 & 0 & 0 & i & 0 & i & i
\end{array}\right) .
\end{equation}
Accordingly {$f_{\widetilde{W}}(\hat{u}_{ii})= \widetilde{W}\hat{u}_{ii} \widetilde{W}^\dagger$} span all $\mathfrak{u}(d)$ and so {$f_{\widetilde{W}}$} is a surjective (hence invertible) linear function. Now, {$\widetilde{W}$} does not have orthonormal rows, so it cannot be straightforwardly extended to a unitary operator on $\mathscr{H}_{d^2}$: we have to orthonormalize them. We observe that the scalar product between the rows $X_1,\ldots, X_d$ of {$\widetilde{W}$} gives
\begin{equation}
\left\{
\begin{array}{l@{}c@{}l@{\ \ }r}
	X_n\cdot X_n^\dag &{}={}& 2d-1,    & n   \in \{1,\ldots,d\} ,\\
	X_n\cdot X_m^\dag &{}={}& 1-i,     & n < m  \in \{1,\ldots,d\} ,\\
	X_n \cdot X_m^\dag &{}={}& 1+i,     & n > m  \in \{1,\ldots,d\} .
\end{array}
\right.
\end{equation}
We can orthogonalize them by changing only the entries of the leftmost $d \times d$ submatrix of {$\widetilde{W}$}. In the case $d=4$ we start with
\begin{equation}
A_{(0)} =
\left(\begin{array}{cccc}
 1 & 0 & 0 & 0 \\
 0 & 1 & 0 & 0 \\
 0 & 0 & 1 & 0 \\
 0 & 0 & 0 & 1 
\end{array}\right).
\end{equation}
Then, we make the first row orthogonal to all the others by adding $-1-i$ to all subdiagonal elements in the first column,
\begin{equation}
A_{(1)} =
\left(\begin{array}{cccc}
 1 & 0 & 0 & 0 \\
 -1-i & 1 & 0 & 0 \\
 -1-i & 0 & 1 & 0 \\
 -1-i & 0 & 0 & 1 
\end{array}\right).
\end{equation}
Now, $X_2\cdot X_3^\dag=X_2\cdot X_4^\dag=1-i+(-1-i)(-1+i)= 3-i$, so we can make $X_2$ orthogonal to all the other rows with
\begin{equation}
A_{(2)} =
\left(\begin{array}{cccc}
 1 & 0 & 0 & 0 \\
 -1-i & 1 & 0 & 0 \\
 -1-i & -3-i & 1 & 0 \\
 -1-i & -3-i & 0 & 1 
\end{array}\right).
\end{equation}
Finally, $X_3\cdot X_4^\dag=1-i+(-1-i)(-1+i)+(-3-i)(-3+i)=13-i$, and we can make all the vectors orthogonal with
\begin{equation}
A_{(3)} =
\left(\begin{array}{cccc}
 1 & 0 & 0 & 0 \\
 -1-i & 1 & 0 & 0 \\
 -1-i & -3-i & 1 & 0 \\
 -1-i & -3-i & -13-i & 1 
\end{array}\right).
\end{equation}
This can be extended to any dimension $d$ replacing the leftmost $d\times d$ matrix of {$\widetilde{W}$} with the 
triangular matrix
\begin{equation}
A_{(d-1)} =
\left(\begin{array}{cccccc}
 1 & 0 & 0 & 0 & \cdots & 0 \\
 a_1& 1 & 0& 0 & \cdots & 0 \\
 a_1 & a_2 & 1 &0 & \cdots & 0 \\
  a_1 & a_2 & a_3 & 1 & \cdots & 0 \\
 \vdots & \vdots & \vdots &\vdots &\ddots & \vdots  \\
 a_1 & a_2& a_3 & a_4& \cdots &1 
\end{array}\right),
\end{equation}
where $a_1= -1 -i$ while for $n\in \{ 2,\ldots, d-1\}$  the remaining subdiagonal elements are obtained by solving the recursive equation 
\begin{equation}
	a_n = -1-i - \sum_{k=1}^{n-1}|a_k|^2 .
\end{equation}
For future reference we notice that all $a_n$ have negative real and imaginary parts,
\begin{equation} 
\Re a_n = -\left(1+\sum_{k=1}^{n-1}|a_k|^2\right), \qquad
\Im a_n = -1. \label{ALLPO}
\end{equation} 
Next, the rows of the submatrix $A_{(d-1)}$ are normalized to $1$ obtaining 
\begin{equation}
\left(\begin{array}{cccccc}
 1 & 0 & 0 &  \cdots & 0 \\
 a_1/N_1& 1/N_1 & 0&  \cdots & 0 \\
 a_1/N_2 & a_2/N_2 & 1/N_2 & \cdots & 0 \\
  a_1/N_3 & a_2/N_3 & a_3/N_3 & \ddots & 0 \\
 \vdots & \vdots & \vdots & &  \vdots  \\
 a_1/N_{d-1} & a_2/N_{d-1}& a_3/N_{d-1} & \cdots &1 /N_{d-1}
\end{array}\right),
\end{equation}
with $N_n= \sqrt{1 + \sum_{k=1}^{n} |a_k|^2} = \sqrt{|{\Re a_{n+1}}|}$. We now replace this into
the {$\widetilde{W}$} and normalize the resulting rows to 1 by dividing them by the constants $\sqrt{2d-1}$. 
The resulting $d\times d^2$ matrix is our solution {$W$}. By construction it has orthonormal rows as required by (\ref{orthoN}), 
so it can be extended to a unitary matrix $U$, {such that $W=PU$}.

Moreover the associated function $f_{PU} $ is still surjective. 
This can be proven by induction.
To this end we find it useful to introduce the notion of $k$-submatrix: 
specifically a $k$-lower-right submatrix  ($k$-LRS) is a $d\times d$ Hermitian  matrix whose non-zero entries are only in lower-right submatrix associated with the last $k$ rows and columns. We then call $R$ the right part of the matrix {$W$} (the last $d^2-d$ columns), which is the same as the one we had for the {$\widetilde{W}$} apart from the global rescaling by the factor $1/\sqrt{2d-1}$.
The basic step is to show that, under outer products with themselves, $X^d$ and the columns of $R$ span all the $1$-LRS\@.
This is obvious, as such matrices are obtained as a multiple of $X^d \cdot X^{d\dag}$. Then, we have to show that, if we have $X^d, X^{d-1},\ldots, X^{d-k}$ and the columns of $R$, we can span all $(k+1)$-LRSs. By induction hypothesis, we suppose that we can already obtain all $k$-LRSs. To prove the thesis is then sufficient to show that we can generate the  set of $(k+1)$-LRSs whose non-zero elements are given by 
\begin{align}
\left(\begin{array}{c|ccc}
 1     & 0 &\cdots & 0 \\
 \hline
 0     &   &       &   \\
\vdots &   & \sharp  &   \\
 0     &   &       & 
\end{array}\right),
\left(\begin{array}{c|ccc}
 0     & 1 &\cdots & 0 \\
 \hline
 1     &   &       &   \\
\vdots &   & \sharp  &   \\
 0     &   &       & 
\end{array}\right),
\ldots,
\left(\begin{array}{c|ccc}
 0     & 0 &\cdots & 1 \\
 \hline
 0     &   &       &   \\
\vdots &   & \sharp  &   \\
 1     &   &       & 
\end{array}\right),&\nonumber\displaybreak[0]\\
\left(\begin{array}{c|ccc}
 0     &-i &\cdots & 0 \\
 \hline
 i     &   &       &   \\
\vdots &   & \sharp  &   \\
 0     &   &       & 
\end{array}\right),
\ldots ,
\left(\begin{array}{c|ccc}
 0     & 0 &\cdots &-i \\
 \hline
 0     &   &       &   \\
\vdots &   & \sharp  &   \\
 i     &   &       & 
\end{array}\right),
&
\end{align}
where the symbol ``$\sharp$" represents a generic $k\times k$ matrix.
To achieve this we are allowed to use 
arbitrary linear combinations of the following set of $(k+1)$-LRSs, which are trivially generated via outer product by the vectors $X^d, X^{d-1},\ldots, X^{d-k}$  and by the columns of $R$,
\begin{align}
&\left(\begin{array}{c|ccc}
 1/N_{d-k-1}  &a_{d-k}^*/N_{d-k} &\cdots &a_{d-k}^*/N_{d-1}\\
 \hline
 a_{d-k}/N_{d-k} &   &       &   \\
\vdots &   & \sharp  &   \\
a_{d-k}/N_{d-1}&   &       & 
\end{array}\right),\label{mat_1}\\ 
&
\left(\begin{array}{c|ccc}
 1     & 1 &\cdots & 0 \\
 \hline
 1     &   &       &   \\
\vdots &   & \sharp  &   \\
 0     &   &       & 
\end{array}\right),
\ldots ,
\left(\begin{array}{c|ccc}
 1     & 0 &\cdots & 1 \\
 \hline
 0     &   &       &   \\
\vdots &   &\sharp  &   \\
 1     &   &       & 
\end{array}\right),\label{mat_2}\\ 
&
\left(\begin{array}{c|ccc}
 1     &-i &\cdots & 0 \\
 \hline
 i     &   &       &   \\ 
\vdots &   & \sharp  &   \\
 0     &   &       & 
\end{array}\right),
\ldots ,
\left(\begin{array}{c|ccc}
 1     & 0 &\cdots &-i \\
 \hline
 0     &   &       &   \\
\vdots &   & \sharp  &   \\
 i     &   &       & 
\end{array}\right). \label{mat_3}
\end{align}
The result can then be trivially proved by showing that among such linear combinations one can identify the $(k+1)$-LRSs whose non-zero elements are in the form
\begin{equation}
\left(\begin{array}{c|ccc}
c     &0 &\cdots &0\\
 \hline
 0 &   &       &   \\
\vdots &   & \sharp  &   \\
0&   &       & 
\end{array}\right), \label{target} 
\end{equation} 
with $c\neq 0$. This is done by starting from the matrix  \eqref{mat_1} and then subtracting the off-diagonal elements using the matrices \eqref{mat_2} and \eqref{mat_3}. As a result, we get a matrix (\ref{target}) with 
\begin{equation}
c = \frac{1}{N_{d-k-1}}  - \sum_{j=1}^k\frac{1}{N_{d-j}}(\Re a_{d-k} + \Im a_{d-k}),
\end{equation}
which is indeed different from zero, as according to Eq.\ (\ref{ALLPO}) all the terms are positive.  This concludes the induction step, and the Theorem is proven. {\qed}

\end{document}